\newtheorem{theorem}{Theorem}[section]
\newtheorem{lemma}[theorem]{Lemma}
\newtheorem{definition}[theorem]{Definition}
\newcolumntype{\expand}{}
\long\@namedef{NC@rewrite@\string\expand}{\expandafter\NC@find}
  \def\problemdefinition@arg{#1}%
  \def\problemdefinition@framed{framed}%
  \def\problemdefinition@lined{lined}%
  \def\problemdefinition@doublelined{doublelined}%
    \def\problemdefinition@hline{}%
      \def\problemdefinition@hline{\hline\hline}%
      \def\problemdefinition@hline{\hline}%
    \def\problemdefinition@tablelayout{|>{\bfseries}lX|c}%
    \def\problemdefinition@title{\multicolumn{2}{|l|}{%
        \raisebox{-\fboxsep}{\textsc{\Large #2}}%
      }}%
    \def\problemdefinition@tablelayout{>{\bfseries}lXc}%
    \def\problemdefinition@title{\multicolumn{2}{l}{%
        \raisebox{-\fboxsep}{\textsc{\Large #2}}%
      }}%
\title{Sharp Noisy Binary Search with Monotonic Probabilities}
\author{Lucas Gretta\\UC Berkeley\footnote{Work partially done while at UT Austin} \and Eric Price\\UT Austin}
\newcommand{\eps}{\varepsilon}
\newcommand{\abs}[1]{{|#1|}}
\newcommand{\ceil}[1]{{\lceil #1 \rceil}}
\newcommand{\floor}[1]{{\lfloor #1 \rfloor}}
\newcommand{\E}{{\mathbb{E}}}
\newcommand{\gammaiterations}{\frac{1 + O(\gamma)}{C_{\tau, \eps}} (\lg n + O(\sqrt{\log n \log \frac{1}{\delta}} + \log \frac{1}{\delta}))}
\newcommand{\capacity}{C_{\tau,\eps}}
\newcommand{\NBSName}{BayesianScreeningSearch}
\newcommand{\SillyNBSName}{SillyBayesianScreeningSearch}
\newcommand{\bayesalg}{\textsc{BayesLearn}}
\newcommand{\mainalg}{\textsc{BayesianScreeningSearch}}
\newcommand{\mainalgsilly}{\textsc{SillyBayesianScreeningSearch}}
\newcommand{\NBS}{\textsc{MonotonicNBS}}
\newcommand{\IPNBS}{\textsc{FixedNoiseNBS}}
\DeclareMathOperator*{\argmax}{arg\,max}
\date{}
\begin{document}
\begin{titlepage}
\maketitle

\begin{abstract}
  We revisit the noisy binary search model of~\cite{karp}, in which we
  have $n$ coins with unknown probabilities $p_i$ that we can flip.
  The coins are sorted by increasing $p_i$, and we would like to find
  where the probability crosses (to within $\eps$) of a target value
  $\tau$.  This generalized the fixed-noise model of~\cite{BZ}, in
  which $p_i = \frac{1}{2} \pm \eps$, to a setting where coins near
  the target may be indistinguishable from it.  It was shown
  in~\cite{karp} that $\Theta(\frac{1}{\eps^2} \log n)$ samples are
  necessary and sufficient for this task.

  We produce a \emph{practical} algorithm by solving two theoretical
  challenges: high-probability behavior and sharp constants.  We give
  an algorithm that succeeds with probability $1-\delta$ from
  \[
    \frac{1}{C_{\tau, \eps}} \cdot \left(\lg n + O(\log^{2/3} n \log^{1/3} \frac{1}{\delta} + \log \frac{1}{\delta})\right)
  \]
  samples, where $C_{\tau, \eps}$ is the optimal such constant
  achievable.  For $\delta > n^{-o(1)}$ this is within $1 + o(1)$ of
  optimal, and for $\delta \ll 1$ it is the first bound within
  constant factors of optimal.
\end{abstract}
\thispagestyle{empty}
\end{titlepage}

\section{Introduction}

Binary search is one of the most fundamental algorithms in computer
science, finding an index $i^* \in [n]$ from $\log_2 n$ queries asking
if a given index $i$ is larger than $i^*$.  But what if the queries
are noisy?

One model for noisy binary search has each query be incorrect
independently with exactly the same probability $\frac{1}{2} - \eps$.
In this model, which we call \IPNBS{}, a line of
work~\cite{BZ,bayesoptimal,DLU21,GX} has found a sharp
bound for the required expected sample complexity,
with tight constants.
However, in many applications of noisy binary search the error
probability is not fixed, but varies with $i$: comparing $i$ to
$i^*$ is much harder when $i$ is close to $i^*$.

As one example, consider the problem of estimating the sample
complexity of an algorithm such as for distribution testing or noisy
binary search itself.  Proofs in this space are often sloppy with
constant factors, so the proven bound is not reflective of the true
performance.  If so, we would like to empirically estimate the sample
complexity $i$ at which the success probability $p_i$ is above a given
threshold $\tau$ (say, 90\%).  (In some cases we even know the
worst-case distribution~\cite{diakonikolas2017optimal} so the
empirical estimate is of the worst-case performance, not just the
distributional performance.)  We can run the algorithm at a given
sample complexity $i$ and check correctness, getting \textsc{Success}
with probability $p_i$.  The success probability is monotonic in $i$,
and we would like to estimate the $i^*$ where $p_i$ crosses $\tau$.
Finding $i^*$ exactly may be very hard---the success probability at
10000 and 10001 samples are likely to be almost identical---so we
would settle for \emph{some} index with $p_i \approx \tau$.

For a non-computer science example, calculating the LD50 for a substance (the dose needed to kill half of the members of a specific population) is a noisy binary search problem with error probability that skyrockets close to the true answer.

Such considerations led to the noisy binary search model
of~\cite{karp}, which we call $\NBS(\tau, \eps)$: we have $n$ coins
whose unknown probabilities $p_i \in [0, 1]$ are sorted in
nondecreasing order.  We can flip coin $i$ to see heads with
probability $p_i$.  The goal is to find any coin $i$ with nonempty
$[p_i, p_{i+1}] \cap (\tau - \eps, \tau + \eps)$.  This model subsumes
\IPNBS{} (where $p_i = \frac{1}{2} - \eps$ for $i \leq i^*$ and
$\frac{1}{2} + \eps$ otherwise) and of course regular binary search (where $p_i \in \{0, 1\}$).  Throughout this paper we will
suppose that $\tau$ is a constant bounded away from $\{0, 1\}$, $n$
grows to $\infty$, and $\eps$ and the desired failure probability
$\delta$ may be constant or may approach $0$ as $n \to \infty$.

\begin{figure}[h]\label{fig:models}
\begin{center}
\includegraphics[scale=.5]{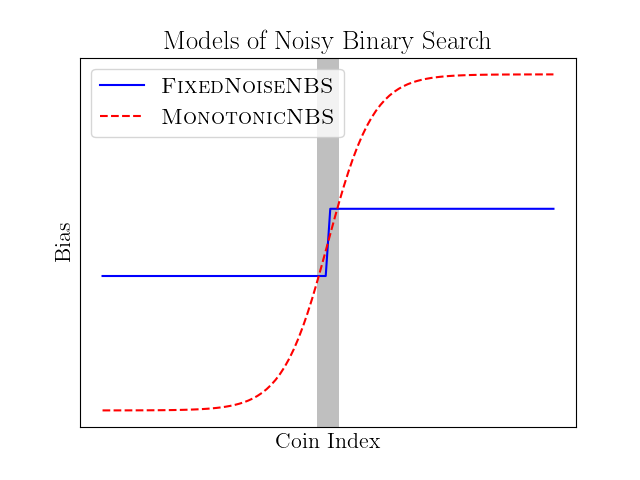}
\caption{In~\IPNBS{}, every coin is $\eps$-far from the true $i^*$
  that must be found.  We consider~\NBS{}, where many coins may be
  close to the threshold and the goal is to find some good coin (the
  gray shaded region).}
\end{center}
\end{figure}

The naive solution to \NBS{} is binary search with repetition: we do
regular binary search, but repeat each query enough times to have
$\frac{\delta}{\log n}$ failure probability if
$p_i \notin[\tau - \eps, \tau + \eps]$.  This gives sample complexity
$O(\frac{1}{\eps^2} \log n \log \frac{\log n}{\delta})$.
In~\cite{karp} it was shown that this extra $\log \log n$ term is
unnecessary, giving two algorithms that each have sample complexity
\[
  O(\frac{1}{\eps^2} \log n \log \frac{1}{\delta}).
\]

In this paper, we show how to improve this bound.  We show upper and
lower bounds that achieve the tight constant on $\log n$, and reduce
the $\log \frac{1}{\delta}$ dependence from multiplicative to
additive.  Figure~\ref{fig:boundstable} compares our result to
existing methods for~\NBS{}.

\begin{figure}
\begin{tabular}{|c|c|c|c|}
  \hline
  Algorithm&Proven query complexity& Actual constant \\
  \hline
  \hline
  Binary Search w/ Repetition & $2\frac{\tau(1-\tau)}{\eps^2} \ln n \cdot \textcolor{black}{\ln \frac{\lg n}{\delta}}$
                                   & 
  \\
  \hline
  \cite{karp} Multiplicative Weights & $4000 \frac{\max(\tau, 1-\tau)}{\eps^2} \ln n \cdot \ln \frac{1}{\delta}$ & $\approx 31$\\
  \hline
  \cite{karp} Backtracking & $476909 \frac{\max(\tau, 1-\tau)}{\eps^2} \ln n \cdot \ln \frac{1}{\delta}$ & $\approx  2000$\\
  \hline
  \textbf{\textsc{BayesianScreeningSearch}}
           &
                                                    $2\frac{\tau(1-\tau)}{\eps^2} \ln n$ &\\
    \hline
\end{tabular}
\caption{Comparison of our result to prior algorithms for \NBS{} in
  the regime of $\eps \ll \min(\tau, 1-\tau)$ and
  $\delta = 1/n^{o(1)}$, ignoring lower order terms.  The analysis
  in~\cite{karp} is not careful with constants, so we also include our
  best estimate of the actual constant after tuning constant factors
  in the algorithms.  }\label{fig:boundstable}
\end{figure}

\paragraph{On Studying Constants.}  When analyzing sublinear
algorithms, and trying to remove $\log \log n$ factors in query
complexity, constant factors really matter.  The proofs in~\cite{karp}
are not careful with constants, but the algorithms themselves
inherently lose constants.  Our best estimate is that one algorithm
``improves'' upon naive repetition by a factor of
$\frac{\ln \lg n}{31}$, and the other by $\frac{\ln \lg n}{2000}$.
Neither is an improvement for any $n$ that will ever be
practical---the better algorithm is only an improvement for
$n > 2^{e^{31}} \approx 10^{10^{11}}$.  By studying constants, we are
forced to design an algorithm that (as we shall see) gives
improvements for practical values of $n$.  We give further discussion
of the value of studying constants in Section~\ref{sec:constants}.

Noisy binary search is intimately connected to the asymmetric binary
channel, i.e., the binary channel that can choose between sending $1$
with probability $\tau - \eps$ or with probability $\tau + \eps$.  If
each $p_i \in \{\tau \pm \eps\}$, then noisy binary search needs to
reveal the $\lg n$-bit $i^*$ through such a channel; queries below
$i^*$ are $1$ with probability $\tau - \eps$ and those above $i^*$ are
$1$ with probability $\tau + \eps$.  The natural target sample
complexity is therefore $\frac{1}{C_{\tau,\eps}} \lg n$, where
$C_{\tau,\eps}$ is the \emph{information capacity} of the asymmetric
binary channel:
\begin{align}
  C_{\tau, \eps} := \max_q H( (1-q) (\tau - \eps) + q(\tau + \eps)) -
  (1-q)H(\tau-\eps) - q H(\tau+\eps)\label{eq:Ctaueps}
\end{align}
where $H(p)$ is the binary entropy function.  For
$\tau = \frac{1}{2}$, the maximum is at $q = \frac{1}{2}$ and this is
just $C_{\frac{1}{2},\eps} = 1 - H(\frac{1}{2}-\eps)$, the capacity of
the binary \emph{symmetric} channel with error probability
$\frac{1}{2} - \eps$.  For $\tau \neq \frac{1}{2}$, the information
obtained from $\tau - \eps$ and $\tau + \eps$ probability coins is not
the same, so the capacity is achieved by getting $\tau + \eps$ coins
with some probability $q$ different from $1/2$; it satisfies
$C_{\tau, \eps} \approx \frac{\eps^2}{2\tau(1-\tau)\ln 2}$ for fixed
$\tau$ as $\eps \to 0$.

\paragraph{Our results.}
Our main result is the following:

\begin{restatable}[Upper bound]{theorem}{mainresult}\label{thm:main}
  Let $0 < \tau < 1$ be a constant. Consider any parameters
  $0< \eps, \delta < 1/2$ with $0 < \eps < \min(\tau, 1-\tau)/2$.  On
  any $\NBS(\tau, \eps)$ input, the algorithm \mainalg{} uses at most
  \[
    \frac{1}{C_{\tau, \eps}} (\lg n + O(\log^{2/3} n \log^{1/3} \frac{1}{\delta} + \log \frac{1}{\delta}))
  \]
  queries and succeeds with probability
  $1-\delta$.
\end{restatable}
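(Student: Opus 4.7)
The proof will be based on a Bayesian framework that simulates transmission through the capacity-achieving asymmetric binary channel. The plan is to maintain a posterior distribution $\pi_t$ over $[n]$ representing the algorithm's current belief about the crossover index, initialized to uniform. At each step, I would select a query index $i_t$ such that the mass of $\pi_t$ above $i_t$ equals roughly $q^*$, the maximizer in the definition of $\capacity$ in~\eqref{eq:Ctaueps}. After flipping coin $i_t$ and seeing the outcome, update $\pi_t$ using the likelihood ratio computed under the pretend model where $p_j \in \{\tau - \eps, \tau + \eps\}$ depending on whether $j < i^*$ or $j \ge i^*$. The algorithm halts and outputs the maximum a posteriori index once some index accumulates posterior mass above $1 - \delta$.

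For the expected sample complexity, I would argue that conditional on a query index $i_t$ whose ``gap'' from $i^*$ places $p_{i_t}$ outside $[\tau - \eps, \tau + \eps]$, the posterior log-likelihood ratio in favor of the correct side shifts by $\capacity$ in expectation, by construction of the capacity-achieving input distribution. For queries where $p_{i_t}$ lies in the ``don't care'' region $(\tau - \eps, \tau + \eps)$, \emph{any} coin in that region is a valid answer, so a careful definition of the posterior-mass potential (summing mass on all acceptable coins, not just $i^*$) should ensure the drift is still at least $\capacity$ per step, with the posterior concentrating on the valid interval rather than moving against the algorithm. This yields an expected $(1/\capacity)(\lg n + O(1))$ sample complexity.

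To convert this expected bound into a high-probability bound of the claimed form, I would view the log-posterior-mass on the acceptable set as a supermartingale-plus-drift and apply a Freedman/Bernstein-style inequality: each step moves the potential by a bounded amount with variance $O(1)$ per step, so concentrating $T \cdot \capacity$ around its mean to within $\lg n$ requires $T = (1/\capacity)(\lg n + O(\sqrt{\lg n \log(1/\delta)} + \log(1/\delta)))$. The $\sqrt{\lg n \log(1/\delta)}$ term would then be sharpened to $\log^{2/3} n \log^{1/3}(1/\delta)$ by a two-phase strategy: run a coarse first phase that uses fewer samples and is allowed a larger failure probability, reducing the instance to a posterior supported on a polylogarithmic-size set; then run a second sharper phase (essentially a tight fixed-noise \IPNBS{} solver) on that shortened instance. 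Optimizing the split point between the two phases yields the $\log^{2/3}$ trade-off.

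The main obstacle is the second paragraph: showing the posterior really drifts at rate $\capacity$ when the true $p_i$ may lie anywhere in $[0,1]$, not just at $\tau \pm \eps$. Coins in the ``don't care'' interval could produce outcomes whose posterior updates, under the pretend two-point model, actively hurt some indices inside the interval. The fix is to design the potential function so that posterior movement \emph{within} the acceptable interval costs nothing (it just moves probability among valid outputs), and to verify via monotonicity of $p_i$ in $i$ that no observation can simultaneously shrink the total mass on the acceptable interval by more than the capacity term. Getting this drift lemma to hold with the exact constant $\capacity$, rather than some smaller constant from a looser union bound or overly conservative likelihood ratio, is where the sharp leading term in Theorem~\ref{thm:main} is won or lost.
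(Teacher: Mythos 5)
Your high-level framework---a Bayesian posterior with capacity-achieving quantile queries, likelihood-ratio updates under the two-point pretend model, and a Freedman-style concentration argument---matches the paper's skeleton. But there is a genuine gap in the middle paragraph, and it is precisely the one you flagged as the ``main obstacle.'' Your proposed fix does not work, and the paper's fix is a substantively different idea.

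The problem with halting once the posterior mass on some index (or even on the whole acceptable interval $[l,r]$) exceeds $1-\delta$: for general \NBS{} inputs, that event may simply never happen in $O((1/\capacity)\lg n)$ steps. If many coins have $p_i = \frac{1}{2} \pm 0.6\eps$, the Bayesian update factors are mis-specified, and the posterior wanders slowly across the acceptable region without concentrating anywhere. Worse, the total mass on $[l,r]$ is not monotone in any useful sense: when you query a coin at position $j \in [l,r]$, the update multiplies everything left of $j$ by one factor and everything right of $j$ by the reciprocal-ish factor, so mass can flow out both ends of $[l,r]$ at once. There is no ``conservation'' argument that keeps the total mass on the acceptable set from decaying; the monotonicity of $p_i$ does not rescue this. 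So ``posterior movement within the acceptable interval costs nothing'' is false, and the Freedman argument you build on it has no submartingale to apply to.

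The paper's resolution is to abandon posterior concentration as the success criterion. Instead it tracks a potential of the form
\[
  \Phi = \lg w(a) + 6\capacity \bigl(\,\lvert\{\text{good intervals visited}\}\rvert - \gamma\,\lvert\{\text{all intervals visited}\}\rvert\,\bigr)
\]
where $a$ is a \emph{single} fixed interval straddling $\tau$. The point is that the Bayesian term $\lg w(a)$ gains at least $\capacity$ per step whenever the query is outside $[l,r]$, and can lose at most $5\capacity$ per step when the query is inside $[l,r]$---but exactly in that case the counting term gains $6\capacity$. So the combined drift is $(1-O(\gamma))\capacity$ per step \emph{unconditionally}, and Freedman gives $\Phi > 0$ w.h.p.\ after $(1+O(\gamma))(1/\capacity)(\lg n + \cdots)$ steps. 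Since $\lg w(a) \le 0$ always, $\Phi > 0$ forces a $\gamma$ fraction of the visited intervals to be good. The algorithm then returns the whole visit transcript, takes $1/\gamma$ quantiles of it (guaranteeing at least one good element), and finishes with a cheap screening phase (and one bounded-depth recursion) on that small set. Running \bayesalg{} with a slightly smaller $\eps'$ gives the slack needed to verify candidates by direct estimation, and optimizing $\gamma$ and $\eps - \eps'$ is what turns the Freedman $\sqrt{\log n \log(1/\delta)}$ into $\log^{2/3} n \log^{1/3}(1/\delta)$---close to your ``two-phase'' intuition, but realized via quantile extraction from the visit list rather than via posterior reduction to a polylogarithmic support, which again would require concentration that may not occur.

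In short: you correctly located the crux, but the ``mass on the acceptable set'' potential does not have nonnegative drift, and MAP-based stopping is unsound for this model. The key idea you are missing is to count \emph{visits to good intervals} as part of the potential and to output from the visit transcript rather than from the posterior.
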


Unlike~\cite{BZ,bayesoptimal,DLU21,WGZW22,GX}, our results apply
to~\NBS{}, not just \IPNBS{}, so they do not restrict the value of
$p_i$ and handle $\tau \neq \frac{1}{2}$.  Unlike~\cite{karp}, we
achieve good constant factors, high-probability results, and a better
scaling with the target $\tau$.  In particular,~\cite{karp} scales
multiplicatively rather than additively with
$O(\log \frac{1}{\delta})$; and it uses a reduction that incurs a
constant-factor \emph{loss} for targets $\tau \neq \frac{1}{2}$, while
Theorem~\ref{thm:main} scales with $\Theta(\tau(1-\tau))$ so
\emph{improves} for $\tau \neq \frac{1}{2}$.

Using Shannon's strong converse theorem, we show that the dependence
on $n$ is tight: for $\eps \gg n^{-1/4}$, any algorithm must sometimes
use $(1 - o(1)) \frac{1}{C_{\tau, \eps}} \lg n$ queries; in fact, it
must use this many queries with nearly $1-\delta$ probability.

\begin{restatable}[Strong converse]{theorem}{ipnbslowerboundrevised}\label{thm:ipnbslowerboundrevised}
  Any $\NBS(\tau, \eps)$ algorithm that succeeds with $1 - \delta$
  probability on inputs with all $p_i \in \{\tau \pm \eps\}$ must have
  at least a $1 - \delta - O(\frac{1}{\gamma^2 n \eps^4})$ chance of using at least
  \[
    (1-\gamma)\frac{1}{C_{\tau, \eps}} \lg n
  \]
  queries, for any $\gamma > 0$.
\end{restatable}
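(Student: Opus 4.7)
The plan is to reduce the theorem to a fixed-budget strong converse for channel coding. Set $T_0 := \lceil(1-\gamma)\lg n / C_{\tau,\eps}\rceil$ and let $E$ denote the event ``the algorithm uses at most $T_0$ queries.'' By a standard truncation, interrupting the original algorithm after $T_0$ steps (outputting its current guess, or an arbitrary default if it hasn't yet terminated) yields a $T_0$-query algorithm whose success probability on any fixed input is at least $\Pr[E \cap \text{success}] \geq \Pr[E] - \delta$. Hence it suffices to prove that any $T_0$-query adaptive algorithm succeeds with probability at most $O(1/(\gamma^2 n \eps^4))$ on the hard input class $\{p_i \in \{\tau \pm \eps\}\}$, which I parameterize by the threshold $i^* \in [n]$. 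Passing to the uniform prior on $i^*$ only lowers the minimax success probability, so I work under that prior.

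On this restricted class the problem is adaptive coding over the asymmetric binary channel of capacity $C_{\tau,\eps}$: each query $X_t$ selects whether the response $Y_t$ is $\mathrm{Bernoulli}(\tau - \eps)$ (if $X_t \leq i^*$) or $\mathrm{Bernoulli}(\tau + \eps)$, and the decoder must recover the $\lg n$-bit $i^*$. Because feedback does not raise the capacity of a DMC, $T_0 C_{\tau,\eps} = (1-\gamma)\lg n$ already gives the weak converse via Fano. For the quantitative strong converse I would run a Wolfowitz-style dispersion argument. Let $Q^*$ be the capacity-achieving output distribution on $\{0,1\}$ and set $\ell_t := \log(P(Y_t|X_t,i^*)/Q^*(Y_t))$. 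By the KKT condition at capacity, $D_{KL}(P(\cdot|x,i) \,\|\, Q^*) \leq C_{\tau,\eps}$ for every query-hypothesis pair, so $L := \sum_{t \leq T_0} \ell_t$ has $\E[L \mid i^*] \leq T_0 C_{\tau,\eps} = (1-\gamma)\lg n$ and each increment is bounded by $O(\eps/\min(\tau,1-\tau))$; hence the martingale $L - \sum_t \E[\ell_t \mid \text{past}, i^*]$ has second moment $O(T_0 \eps^2) = O(\lg n)$.

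To turn this dispersion bound into the claimed success bound I would use change-of-measure to $Q^{*\otimes T_0}$. Under the uniform prior, the success probability of the optimal $T_0$-query algorithm equals $\tfrac{1}{n}\,\E_{Q^{*\otimes T_0}}[\max_i e^{L^{(i)}}]$ where $L^{(i)} := \log(P_i/Q^{*\otimes T_0})$, and each $e^{L^{(i)}}$ has unit mean under $Q^{*\otimes T_0}$. Combining a Chebyshev tail bound on $L^{(i)}$ (deviation $\gamma \lg n/2$ from a mean $\leq (1-\gamma)\lg n$, with variance $O(T_0 \eps^2) = O(\lg n)$) with the meta-converse to control the max over the $n$ hypotheses, and substituting $C_{\tau,\eps} = \Theta(\eps^2/(\tau(1-\tau)))$ so that $T_0 = \Theta(\lg n/\eps^2)$, produces the claimed $O(1/(\gamma^2 n \eps^4))$: the $1/n$ is the prior normalization, the $1/\gamma^2$ comes from Chebyshev at the rate-versus-capacity gap, and the $1/\eps^4$ records that the per-step information density has variance $\Theta(\eps^2)$ (squared through Chebyshev, after multiplying by $T_0 = \Theta(\lg n/\eps^2)$ and dividing by $(\gamma \lg n)^2$).

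The main obstacle is precisely this final quantitative step. A naive Chebyshev on the log-likelihood of a single hypothesis yields only $O(1/(\gamma^2 \lg n))$ and misses the crucial $1/n$ factor corresponding to the $\lg n$ bits of prior information about $i^*$. Recovering it requires the meta-converse (or an equivalent hypothesis-testing reduction) that exploits $\sum_i \E_{Q^{*\otimes T_0}}[e^{L^{(i)}}] = n$ together with the fact that successful decoding forces $L^{(i^*)}$ to stochastically dominate all other $L^{(j)}$. Adaptivity is threaded through via the martingale structure of the information density; this is standard but has to be set up carefully so that the KKT inequality and the bounded-increment variance apply uniformly over every branch of the algorithm's query tree.
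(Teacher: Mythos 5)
Your proposal diverges from the paper's route in a fundamental way: the paper treats Shannon's strong converse (Gallager) as a black box --- it reduces \NBS{} to channel coding over the $(\tau,\eps)$-BAC with shared randomness and perfect feedback (Lemma~\ref{red:fromcom}), then quotes a quantitative strong converse of the form $P_e \geq 1 - K_1/(N(R-C)^2) - \exp(-K_2 N(R-C))$ and plugs in. You instead propose to re-derive the strong converse from first principles via information density, a martingale dispersion bound, change of measure to the capacity-achieving output distribution $Q^*$, and a meta-converse. Your truncation step and your reduction to adaptive coding over the BAC match the paper's, and your treatment of adaptivity via the information-density martingale (using the KKT condition $D(P(\cdot|x)\,\|\,Q^*) \leq C_{\tau,\eps}$ and per-step variance $O(\eps^2)$) is sound. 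That part is a perfectly reasonable alternative to black-boxing Gallager, and arguably more self-contained.

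The genuine gap is exactly where you flag it, but you have misdiagnosed what would fix it. You correctly compute that Chebyshev on the single-hypothesis log-likelihood $L^{(i^*)}$ gives $O(1/(\gamma^2 \lg n))$, since $\mathrm{Var}(L) = O(T_0 \eps^2) = O(\lg n)$ and the deviation threshold is $\Theta(\gamma \lg n)$. You then hope the meta-converse supplies a further $1/n$ factor to reach the claimed $O(1/(\gamma^2 n \eps^4))$. It does not. The Verd\'u--Han / meta-converse decomposition yields $P_{\mathrm{succ}} \leq 2^{-\lambda} + \max_i \Pr_{P_i}[L^{(i)} > \lg n - \lambda]$, and the two terms are \emph{added}, not multiplied. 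With the natural choice $\lambda = \Theta(\gamma \lg n)$ the first term is $n^{-\Theta(\gamma)}$, which is dominated by the Chebyshev term, so the final bound is $O(1/(\gamma^2 \lg n))$ --- not $O(1/(\gamma^2 n \eps^4))$. Multiplying the Chebyshev tail by the prior $1/n$ is not valid: the success probability is $\frac{1}{n}\sum_i P_i(A_i)$, and the change-of-measure argument bounds $\sum_i P_i(A_i \cap \{L^{(i)} \leq \lg n - \lambda\})$ by $n \cdot 2^{-\lambda}$ (the decoding regions partition the space under $Q^{*\otimes T_0}$), while the tail term $\frac{1}{n}\sum_i P_i(L^{(i)} > \lg n - \lambda)$ is bounded by the worst-case single-hypothesis tail, with no extra $1/n$ appearing.

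Your difficulty here is in fact diagnostic of an issue in the theorem statement itself. Gallager's strong converse is stated in terms of the blocklength $N$ (the number of channel uses), with bound $K_1/(N(R-C)^2) + \exp(-K_2 N(R-C))$. In the reduction the blocklength is the query budget $m \approx (1-\gamma)\lg(n-1)/C_{\tau,\eps}$, not the message-set size $n-1$. The paper's application substitutes $n-1$ where $m$ belongs (and relatedly describes the variable as ``input alphabet size''). With the correct substitution one gets $m(R-C)^2 = \Theta(\gamma^2 C_{\tau,\eps} \lg n)$, and since for the BAC the Gallager constant $K_1$ is $O(\eps^2)$ while $C_{\tau,\eps} = \Theta(\eps^2)$ (Lemma~\ref{lem:capacitybounds}), the resulting error term is $O(1/(\gamma^2 \lg n))$ --- exactly the order your Chebyshev gives. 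So your single-hypothesis estimate is the correct scale, and the extra $1/n$ you were chasing is not obtainable by this route (nor, I believe, by any: an Arimoto-type exponential strong converse gives $P_{\mathrm{succ}} \leq n^{-\Theta(\gamma^2)}$, which is larger than $1/n$ for any $\gamma$ bounded away from $1$, and is tight by random coding). To finish your proof, abandon the pursuit of the $1/n$ factor, combine the Chebyshev tail with the $2^{-\lambda}$ change-of-measure term, and you obtain the strong-converse conclusion with error term $O(1/(\gamma^2 \lg n))$ rather than the stated $O(1/(\gamma^2 n \eps^4))$.
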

For $\tau = \frac{1}{2}$, this is also a lower bound for $\IPNBS$.
Thus Theorem~\ref{thm:ipnbslowerboundrevised} gives a new
\emph{worst-case} lower bound for $\IPNBS$, which is a
$\frac{1}{1-\delta}$ factor larger than the lower bound for
\emph{expected} query complexity achieved in prior
work~\cite{BZ,bayesoptimal,DLU21,GX}.

For $\tau \neq \frac{1}{2}$, our results are the first ones connecting
noisy binary search to $C_{\tau,\eps}$, the information capacity of
the binary asymmetric channel.

\paragraph{Our results: expected queries.}
For constant $\delta$, one can get a better bound for the expected
number of queries in a simple way: only run the algorithm with
probability $1 - (1 - \frac{1}{\log n})\delta$, and otherwise output
the wrong answer from zero queries.  This saves essentially a
$1-\delta$ factor in queries, which for constant $\delta$ is
nontrivial:

\begin{restatable}[Upper bound: expected queries]{corollary}{mainresults}\label{cor:expect}
  Under the same conditions as Theorem~\ref{thm:main} and for any $\NBS(\tau, \eps)$ input, algorithm
  \mainalgsilly{} uses
  \[
    \frac{1-\delta}{C_{\tau, \eps}} (\lg n + O(\log^{2/3} n \log^{1/3} \frac{\log n}{\delta} + \log \frac{1}{\delta}))
  \]
  queries in expectation and succeeds with probability $1-\delta$.
\end{restatable}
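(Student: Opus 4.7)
The plan is to implement exactly the strategy sketched in the paragraph preceding the corollary: build \mainalgsilly{} as a probabilistic wrapper around \mainalg{}, gambling a little extra failure probability against the cost of ever having to run the real algorithm. Concretely, I would have \mainalgsilly{} first flip an independent biased coin: with probability $p := 1 - (1 - \tfrac{1}{\lg n})\delta$ it invokes \mainalg{} on the same input but with failure parameter $\delta'$ chosen below, and with the remaining probability $1-p = (1-\tfrac{1}{\lg n})\delta$ it outputs an arbitrary index and makes zero queries.

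Next I would fix $\delta' = \Theta(\delta/\lg n)$, specifically small enough that $p(1-\delta') \geq 1-\delta$; this is possible since $p = (1-\delta) + \delta/\lg n$, so any $\delta' \leq \frac{\delta/\lg n}{(1-\delta) + \delta/\lg n}$ suffices. The total success probability is then at least $p(1-\delta') \geq 1-\delta$, fulfilling the guarantee. By Theorem~\ref{thm:main} a call to \mainalg{} with parameter $\delta'$ costs
\[
  Q \;=\; \frac{1}{C_{\tau,\eps}}\left(\lg n + O\!\left(\log^{2/3} n \,\log^{1/3}\tfrac{1}{\delta'} + \log\tfrac{1}{\delta'}\right)\right) \;=\; \frac{1}{C_{\tau,\eps}}\left(\lg n + O\!\left(\log^{2/3} n \,\log^{1/3}\tfrac{\lg n}{\delta} + \log\tfrac{\lg n}{\delta}\right)\right)
\]
queries, since $\log\tfrac{1}{\delta'} = \Theta(\log\tfrac{\lg n}{\delta})$. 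The expected query cost of \mainalgsilly{} is therefore $pQ$.

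To match the claimed form I would split $p = (1-\delta) + \delta/\lg n$ and distribute. The $(1-\delta)\cdot Q$ piece is exactly the desired bound (after noting that $\log\log n = O(\log^{2/3} n)$, so $\log\tfrac{\lg n}{\delta}$ is absorbed into the $\log^{2/3} n\,\log^{1/3}\tfrac{\lg n}{\delta}$ term and can be replaced by $\log\tfrac{1}{\delta}$ inside the outer big-O). The residual $\tfrac{\delta}{\lg n}\cdot Q$ contributes $\tfrac{\delta}{C_{\tau,\eps}} + o(\cdot)$, which is dominated by the $\tfrac{1-\delta}{C_{\tau,\eps}} \cdot O(\log\tfrac{1}{\delta})$ slack already in the bound (using $\delta = O(\log\tfrac{1}{\delta})$ for $\delta \leq 1/2$), so it is safely absorbed.

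The main ``obstacle'' here is really just bookkeeping rather than any new algorithmic idea: the only nontrivial points are verifying that $\delta'$ can be taken as small as $\Theta(\delta/\lg n)$ without blowing up the cost $Q$ by more than a $1+o(1)$ factor on the leading $\lg n$, and confirming that the various $\log\log n$ and $\delta/\lg n$ residues land inside the stated $O(\cdot)$. Both follow from the elementary estimates above; no further probabilistic argument is needed beyond the independent coin flip that defines the wrapper.
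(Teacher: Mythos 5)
Your proposal matches the paper's proof: the paper defines \mainalgsilly{} as exactly the same wrapper (output a uniformly random interval with probability $\delta-\delta/\lg n$, otherwise call \mainalg{} with $\delta'=\delta/\lg n$), bounds the failure probability by $(\delta-\delta/\lg n) + (1-\delta+\delta/\lg n)\cdot\delta/\lg n \leq \delta$, and computes the expected cost as $(1-\delta+\delta/\lg n)\cdot Q$. Your bookkeeping — absorbing the $\log\log n$ residue into the middle $O(\cdot)$ term and the $\tfrac{\delta}{\lg n}Q$ piece into the $O(\log\tfrac1\delta)$ slack — is the same accounting the paper does implicitly, just spelled out more explicitly.
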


This $1-\delta$ savings is essentially the best possible.  Our strong
converse (Theorem~\ref{thm:ipnbslowerboundrevised}) already implies
this, if $\eps \gg n^{-1/4}$; but using Fano's inequality, the
optimality is true in general:
\begin{restatable}[Weak converse]{theorem}{ipnbscorollary}\label{thm:ipnbscorollary}
  Any $\NBS(\tau, \eps)$ algorithm that succeeds with $1 - \delta$
  probability on inputs with all $p_i \in \{\tau \pm \eps\}$ must use
  \[
    (1-\delta)\frac{\lg (n - 2) - 1}{C_{\tau, \eps}} 
  \]
  queries in expectation.
\end{restatable}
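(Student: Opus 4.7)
The plan is a standard Fano-style converse, exploiting that each query is a single use of the binary asymmetric channel and therefore reveals at most $C_{\tau,\eps}$ bits of information about the threshold.

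I would first reduce to threshold identification by restricting attention to inputs with $p_j = \tau - \eps$ for $j \leq i^*$ and $p_j = \tau + \eps$ for $j > i^*$. For any such input, the only index $i$ with $[p_i, p_{i+1}] \cap (\tau - \eps, \tau + \eps) \neq \emptyset$ is $i = i^*$ itself, since for every other $i$ the interval $[p_i, p_{i+1}]$ collapses to a single point not in the open interval $(\tau-\eps, \tau+\eps)$. Consequently an $\NBS$ algorithm that succeeds with probability $\geq 1 - \delta$ on these inputs must return $i^*$ exactly. Placing the uniform prior on $I^*$ over a fixed subset of $n-2$ thresholds turns the problem into exact identification of a uniform element of an $(n-2)$-ary alphabet.

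Next I would apply Fano's inequality to the algorithm's output $\hat I$. With $\Pr[\hat I \neq I^*] \leq \delta$,
\[
  H(I^* \mid \hat I) \leq H_2(\delta) + \delta \lg(n-3),
\]
and a short rearrangement using $H_2(\delta) \leq 1$ yields $I(I^*; \hat I) \geq (1-\delta)(\lg(n-2) - 1)$.

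The core step is then to upper-bound $I(I^*; \hat I)$ by $\E[T] \cdot C_{\tau,\eps}$, where $T$ is the random number of queries. At time $t$, the queried coin $X_t$ is an adapted function of the history $Y_1, \ldots, Y_{t-1}$, and conditional on that history, $Y_t$ depends on $I^*$ only through the bit $B_t := \mathbf{1}[X_t \leq I^*]$; given $B_t$, $Y_t$ is drawn from the binary asymmetric channel with parameters $\tau \pm \eps$. By data processing and the very definition of $C_{\tau,\eps}$,
\[
  I(I^*; Y_t \mid Y_1, \ldots, Y_{t-1}) \leq I(B_t; Y_t \mid Y_1, \ldots, Y_{t-1}) \leq C_{\tau,\eps}.
\]
Summing via the chain rule along the (random) trajectory and invoking the identity $\E[T] = \sum_{t \geq 1} \Pr[T \geq t]$ gives $I(I^*; Y_1, \ldots, Y_T, T) \leq \E[T] \cdot C_{\tau,\eps}$. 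Combined with the Fano lower bound from the previous step and the data-processing bound $I(I^*; \hat I) \leq I(I^*; Y_1, \ldots, Y_T, T)$, this yields the claimed $\E[T] \geq (1-\delta)(\lg(n-2)-1)/C_{\tau,\eps}$.

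The main obstacle is the capacity step: making the per-query chain-rule bound aggregate correctly in expectation despite the data-dependent stopping time $T$. The cleanest way is to couple with an extension of the algorithm that keeps drawing samples past $T$ but continues to use only $Y_1, \ldots, Y_T$ to form $\hat I$; then the horizon is fixed (or taken to $\infty$) and one can slice the mutual information cleanly using the indicators $\mathbf{1}[T \geq t]$. Everything else (the reduction to threshold identification and the Fano computation) is routine.
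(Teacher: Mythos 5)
Your proposal is correct and takes essentially the same route as the paper: restrict to inputs with $p_i\in\{\tau\pm\eps\}$, apply Fano's inequality, and charge each query at most $C_{\tau,\eps}$ bits of information about the threshold. The paper packages the per-query capacity accounting as a reduction to communication over a $(\tau,\eps)$-BAC with feedback (Lemma~\ref{red:fromcom}) and invokes that feedback does not increase DMC capacity, whereas you unpack the same bound directly via the chain rule on $I(I^*;Y_1,\ldots,Y_T,T)$ with an explicit treatment of the random stopping time $T$ --- which is, if anything, slightly more careful than the paper's terse appeal to Lemma~\ref{red:fromcom}, since that lemma is stated for a fixed rather than expected sample budget.
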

Theorem~\ref{thm:ipnbscorollary} was essentially shown in~\cite{BZ},
which proved the $\tau = \frac{1}{2}$ case (by giving hardness for \IPNBS{}).

\paragraph{Our results: experiments.}
In Section~\ref{sec:experiments} we compare our approach to naive
repetition and the~\cite{karp} algorithms.  We find, for $n \geq 10^3$
and $\eps = .1$, that our approach outperforms naive repetition,
which outperforms both~\cite{karp} algorithms.  For $n = 10^9$, our
approach uses $2.3\times$ fewer samples than naive repetition.

\subsection{Algorithm Overview}
We now describe our noisy binary search
algorithm in the case of $\tau = \frac{1}{2}$ and
$\delta > 1/n^{o(1)}$.

\paragraph{Bayesian start.} The natural choice for a ``hard'' instance is
when $p_i \in \{\tau \pm \eps\}$, so the algorithm must find the
transition location $i^*$, and information theoretic arguments show
$\frac{1}{C_{\tau, \eps}} \lg n$ queries are necessary.  To avoid
losing a constant factor in sample complexity, the algorithm
essentially must spend most of its time running the Bayesian
algorithm.  This algorithm starts with a uniform prior over which
interval crosses $\tau$, makes the maximally informative query,
updates its posterior, and repeats.  When $\tau = \frac{1}{2}$, the
maximally informative query is the median under the posterior, and the
Bayesian update is to multiply intervals on one side of the query by
$1 + 2 \eps$ and the other side by $1 - 2 \eps$.  This algorithm,
\bayesalg, is given in Algorithm~\ref{alg:bayes}; the algorithm for
general $\tau$ is given in Section~\ref{sec:keylemma}.

As a technical side note, the discrete nature of the problem
introduces a bit of subtlety.  Note that \NBS{} flips coins $i$ but
 returns an \emph{interval} between coins that should be good:

\begin{definition}
  We say that an interval $[i, i+1]$ is \emph{$(\tau, \eps)$-good} if
  $[p_i, p_{i+1}] \cap (\tau-\eps, \tau+\eps)$ is nonempty.
\end{definition}

Precisely, our version of the Bayesian algorithm is as follows: we
start with a uniform prior over intervals.  The median of our
posterior can be viewed as a fractional coin, and we flip the nearest
actual coin but update our posterior as if we flipped the fractional
coin.  So, for example, suppose the median is $4.7$
($.7 * w(5) + \sum_{i=1}^4 w(i) = .5$).  We flip coin 5, and if it
comes out $0$, that suggests the true threshold is probably above $5$.
We then scale up our posterior on all intervals above $5$ by
$1 + 2 \eps$; scale down intervals below $4$ by $1-2\eps$; and scale
the weight on interval $[4, 5]$ by $.3 (1 + 2\eps) + .7(1-2\eps)$.
This new posterior is still a distribution that sums to $1$.

\begin{algorithm}
  \caption{Bayesian learner in $\tau = \frac{1}{2}$ case. Flips $M$ coins and returns $M$ intervals.}\label{alg:bayes}
  \hspace*{\algorithmicindent} \textbf{Input} A set of $n$ queryable coins, update size $\eps$, number of steps $M$. \\
 \hspace*{\algorithmicindent} \textbf{Output} A list of $M$ intervals queried.
  \begin{algorithmic}[1]
    \Procedure{\bayesalg}{$\text{coins}, \varepsilon, M$}
    \State $n \gets \abs{\text{coins}}$
    \State $w_1 \gets \text{uniform}([n-1])$ \Comment{Prior distribution over intervals}
    \State $L \gets \{\}$
    \For{$i \in [M]$}
    \State $j_i \gets $ median interval of $w_i$
    \State $x_i \gets $ either $j_i$ or $j_i+1$, whichever is closer to the median
    \State append $j_i$ to $L$
    \State $y_i \gets $ flip coin $x_i$ \Comment{$1$ with probability $p_{x_i}$}
    \State $w_{i+1}(x) \gets 
    \begin{cases} 
      w_i(x)\cdot (1 - 2\eps(-1)^{y_i})  &\mbox{if } x < j_i \\ 
      w_i(x)\cdot (1 + 2\eps(-1)^{y_i}) &\mbox{if } x > j_i\\
      \text{remainder so $w_{i+1}$ sums to $1$}  &\mbox{if } x = j_i
    \end{cases}
    $
    \EndFor
    \State \Return $L$
    \EndProcedure
  \end{algorithmic}
\end{algorithm}

\paragraph{Using the result.}
After running the Bayesian algorithm for most of our query budget, we
need to output an answer. The question becomes: how can we take the
transcript of the Bayesian algorithm and extract a useful worst-case
frequentist guarantee?  We need the algorithm to work for \emph{all}
monotonic $p$, which can have values very different than
$\tau \pm \eps$.

In the prior work achieving tight constants
for~\IPNBS{}~\cite{BZ,DLU21}, because the $p_i$ are guaranteed to be
$\frac{1}{2} \pm \eps$, the analysis can show that the weight of the
single ``good'' interval grows in expectation at each step.  By a
Hoeffding bound, after the desired number of iterations the ``good''
interval has more weight than every other interval combined, so it can
be easily selected.  But that property is not true for the more
general $p_i$ of~\NBS{}: if many $p_i$ are $\frac{1}{2} \pm 0.6 \eps$,
the Bayesian algorithm will wander somewhat too slowly through these
samples without growing any single interval by the desired amount.

However, in such cases the Bayesian algorithm is spending a lot of
time among good intervals.  
This holds in general.  Our key lemma shows that, if we run $\bayesalg$ for $1 + O(\gamma)$
times the information theoretic bound $\frac{1}{C_{\tau,\eps}} \lg n$,
a $\gamma$ fraction of the intervals it visits are
$(\tau, \eps)$-good:

\begin{restatable}[Bayesian performance]{lemma}{keylemma}
  \label{lem:keylemma}
  Consider any $0 < \eps, \tau, \delta, \gamma < 1$ with
  $\gamma \leq \frac{1}{7}$, $\eps < \min(\tau,1-\tau)/2$, and let $L$ be the list of intervals
  returned by \bayesalg, when run for
  \[
     \frac{1 + O(\gamma)}{C_{\tau, \eps}} \cdot \left(\lg n + O(\sqrt{\log n \log \frac{1}{\delta}} + \log \frac{1}{\delta})\right)
  \]
  iterations on an $\NBS$ instance.  With probability $1-\delta$, at least a $\gamma$ fraction of the intervals
  in $L$ are $(\tau, \eps)$-good.
\end{restatable}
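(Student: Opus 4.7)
The plan is a potential-function argument tracking the posterior mass on the set $I_G \subseteq [n-1]$ of $(\tau,\eps)$-good intervals, which forms a contiguous block by monotonicity of $p$. Define $\Phi_i := -\lg w_i(I_G)$; then $\Phi_0 \leq \lg(n-1)$ (uniform prior) and $\Phi_i \geq 0$, so $\Phi_0 - \Phi_T \leq \lg n$ deterministically. The goal is to argue that this bounded budget forces most queried intervals $j_i$ to be good.

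The per-step drift bound handles bad queries first. When $j_i$ is not $(\tau,\eps)$-good, monotonicity of $p$ places both $p_{j_i}$ and $p_{j_i+1}$ strictly on one side of $[\tau-\eps,\tau+\eps]$---say both $\leq \tau-\eps$---so the flipped coin satisfies $p_{x_i} \leq \tau-\eps$ and $I_G$ lies entirely above $j_i$. The Bayesian update rescales the whole set $I_G$ by the single factor $A = 1 + 2\eps(-1)^{y_i}$, and a Taylor expansion gives
\[
\E[\Phi_i - \Phi_{i+1} \mid j_i \text{ bad}] = (1-p_{x_i})\lg(1+2\eps) + p_{x_i}\lg(1-2\eps) \geq C_{\tau,\eps},
\]
with equality at the worst case $p_{x_i} = \tau-\eps$; this is exactly the asymmetric-binary-channel capacity, and the general-$\tau$ version of \bayesalg{} from Section~\ref{sec:keylemma} is designed with the appropriate quantile and likelihood-ratio scaling so that the analogous bound holds. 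The main obstacle is good queries: when $j_i \in I_G$, a direct expansion of $w_{i+1}(I_G) = w_i(I_G) + 2\eps(-1)^{y_i}(L-R)$, where $L, R$ are the bad-below/above masses, yields $\E[\Phi_i - \Phi_{i+1}] = \E[\lg(1+qs)]$ with $q = 2\eps(L-R)/w_i(I_G)$ and $s = (-1)^{y_i}$. The second-order term $-q^2/(2\ln 2)$ can be as negative as $-O(C_{\tau,\eps})$ per step when $|q|\approx 2\eps$. I would control this by observing that large $|q|$ requires $w_i(I_G)$ to be small, and in that regime the median $j_i$ is itself unlikely to lie in $I_G$, so pathological good steps are rare; alternatively, a refined potential (e.g.\ a convex combination of $-\lg w_i(I)$ over $I \in I_G$ tailored to have nonnegative drift at every step) should bound the cumulative good-query loss to $o(\lg n)$.

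To pass from expectation to high probability I would apply Freedman's inequality to the martingale of centered increments: the per-step jumps are $O(\eps)$ (each multiplier lies in $[1-2\eps, 1+2\eps]$), and the total conditional variance is $O(T\eps^2) = O(\lg n)$, which yields additive deviation $O(\sqrt{\lg n \log(1/\delta)} + \log(1/\delta))$ with probability $1 - \delta$---the $\log(1/\delta)$ term coming from the Bernstein-type jump penalty and the $\sqrt{\lg n \log(1/\delta)}$ from the variance. Letting $B, G$ denote the numbers of bad and good steps and combining with the drift lower bound gives
\[
B \cdot C_{\tau,\eps} \leq \lg n + O\!\left(\sqrt{\lg n \log(1/\delta)} + \log(1/\delta)\right)
\]
with probability $1-\delta$. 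For the stated $T = \frac{1+O(\gamma)}{C_{\tau,\eps}}\bigl(\lg n + O(\sqrt{\lg n \log(1/\delta)} + \log(1/\delta))\bigr)$, this implies $G = T - B \geq \gamma T$ once the hidden $O(\gamma)$ constants are tuned, as required.
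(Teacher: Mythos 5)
Your scaffold (per-step drift, Freedman, count bad steps) matches the paper's at a high level, but there are two substantive issues, one of which is a real gap. First, you track $\Phi_i = -\lg w_i(I_G)$, the log-mass of the \emph{entire} good block, while the paper tracks $\lg w(a)$ for a \emph{single} canonical good interval $a$ (the largest $i$ with $p_i \leq \tau$). This matters for good queries: when $j_i \in I_G$ but $j_i \neq a$, the mass $w(a)$ still gets hit by a single multiplicative factor $d_{y,s}$, which makes the paper's case analysis clean, whereas $w(I_G)$ always becomes a convex combination (your $\lg(1+qs)$) as soon as $j_i \in I_G$. Your choice isn't fatal, but it pushes all the difficulty into the step you don't finish.

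Second, and this is the gap: your good-query bound is not actually proved, and the two fixes you gesture at don't work as stated. You write that large $|q|$ requires $w_i(I_G)$ small and that ``in that regime the median $j_i$ is itself unlikely to lie in $I_G$'' --- but the median is a deterministic function of $w_i$, and nothing prevents a tiny $I_G$ from sitting exactly at the posterior median (this is in fact the generic situation once the posterior has concentrated near the right place). More importantly, aiming for cumulative good-step loss $o(\lg n)$ is both unnecessary and likely unachievable in the worst case. Your final displayed inequality $B \cdot C_{\tau,\eps} \leq \lg n + O(\cdots)$ silently ignores the good-step contribution: the bounded budget $\Phi_0 - \Phi_T \leq \lg n$ only caps the \emph{net} decrease, so if good steps can raise $\Phi$ the bad-step decreases are not bounded by $\lg n$ alone. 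What you actually need --- and what the paper proves, via explicit bounds on the $d_{x,y}$ factors (Lemmas~\ref{lem:logdbounds} and~\ref{lem:capacitybounds}) --- is that the per-good-step expected loss is at most a fixed multiple of $\capacity$ (the paper gets $5\capacity$). With that, the paper augments the potential to $\lg w(a) + 6\capacity(\#\text{good} - \gamma \cdot \#\text{steps})$, which has drift $\geq (1-6\gamma)\capacity$ on \emph{every} step; positivity of this potential, plus $\lg w(a) \leq 0$, directly forces the good fraction above $\gamma$. You should replace the $o(\lg n)$ target with a per-step $O(\capacity)$ bound and fold the good-step count into the potential rather than trying to make good steps cost nothing.
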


By considering the $\gamma$-quantiles of the returned list, we reduce
$n$ to $\frac{1}{\gamma}$.  We can now run a less efficient noisy
binary search algorithm on this small subproblem.  There are some
complications, as the solution to the new noisy binary search could
correspond to a larger interval than two adjacent coins. To deal with
this, we run \bayesalg{} with $\eps^\prime = (1 - o(1))\eps$, which
lets us test our candidate answers.

\paragraph{Technical comparison of techniques.}
How we leverage the bayesian learner is the main technical difference
between our upper bound and that of prior work \cite{karp,BZ,DLU21}.
As described above, the situation is rather simpler for~\IPNBS.
For~\NBS{}, \cite{karp} instead used conservative updates in their
multiplicative weights algorithm: rather than the true Bayesian update
$1 \pm 2\eps$, it multiplies by about $1 \pm \frac{3}{5} \eps$.  This
necessarily loses a constant factor, but ensures that either the
median interval queried or the last interval queried is good.  This
property is not true for the true Bayesian algorithm with sharp
constant.

\subsection{Related Work}

The~\IPNBS{} version of noisy binary search, where
$\tau = \frac{1}{2}$ and $p_i \in \{\frac{1}{2} \pm \eps\}$, was posed by
Burnashev and Zigangirov~\cite{BZ}, who showed how to achieve
\[
  \frac{1}{C_{\frac{1}{2},\eps}} \left(\lg n + \lg \frac{1}{\delta} + \lg \frac{1 + 2\eps}{1-2\eps}\right)
\]
\emph{expected} queries (in Russian; see~\cite{WGZW22} for an English
proof).  Essentially the same~\cite{BZ} algorithm for~\IPNBS{} was
rediscovered in~\cite{bayesoptimal}.  Some bugs with the~\cite{bayesoptimal} proof were
discovered and fixed in \cite{DLU21}, as well as an analysis of a
variant of the algorithm for \emph{worst-case} sample complexity
\[
  \frac{1}{C_{\frac{1}{2},\eps}} \left(\lg n + O(\sqrt{\log n \log \frac{1}{\delta}}) + O(\log \frac{1}{\delta})\right).
\]

For $1 \ll \log \frac{1}{\delta} \ll \log n$, Gu and
Xu~\cite{GX} showed black-box improvements for other $\delta$. If
$\delta$ is constant, they output $\perp$ with probability
$\delta - \frac{1}{\log n}$, and otherwise run the~\cite{DLU21} algorithm
with $\delta' = \frac{1}{\log n}$.  On the other hand, for
$\delta = n^{-\Omega(1)}$, repeatedly running~\cite{DLU21} with
$\delta' = \frac{1}{\log n}$ and checking the result gives
improvements:
\[
  (1 + o(1)) \left(\frac{1-\delta}{C_{\frac{1}{2},\eps}} \lg n +  \frac{\log \frac{1}{\delta}}{\eps \log \frac{1+2\eps}{1-2\eps}}\right)
\]
For $\eps \ll 1$, this is a factor 2 improvement on the constant
factor on $\log \frac{1}{\delta}$.  Moreover,~\cite{GX} shows that
this bound is sharp in both $n$ and $\delta$.

Our version of noisy binary search,~\NBS{}, was first posed by Karp
and Kleinberg~\cite{karp}.  They gave two algorithms, based on
recursive backtracking and multiplicative weights respectively, that
take $O(\frac{1}{\eps^2}\log n)$ queries for constant $\delta$, which
they showed is within constant factors of optimal for constant
$\tau, \delta$.  Unfortunately, the constant factors make both
algorithms worse than the naive repetition algorithm for any
reasonable $n$ (see Figure~\ref{fig:boundstable} and
Section~\ref{sec:experiments}).

\paragraph{Other models.}

There are many different variations for noisy binary search (see
\cite{pelcsurvey} for a survey of older work on the subject).
Emamjomeh-Zadeh, Kempe, and Singhal~\cite{noisygraphbinarysearch}
solve an extension of~\IPNBS{} from the line to graphs.  This result was
improved and simplied by Dereniowski, Tiegel, Uznański and Wolleb-Graf \cite{DTUW20}, 
which was in later improved and simplified by Dereniowski, Łukasiewicz, and Uznański
\cite{DLU21}.  Nowak developed a different generalization of~\IPNBS{}
to general hypothesis classes \cite{nowak}.  Waeber, Frazier, and
Henderson~\cite{noisybisectionsearch} investigates a continuous
variant of~\IPNBS{}, where the target is a point in the real interval
$[0,1]$, and show that the Bayesian algorithm converges geometrically
(the ideal convergence up to constant factors).

To our knowledge,~\cite{karp} is the only previous work that handles a
setting like~\NBS{} where the ``true'' coin may be indistinguishable
from nearby coins, and the goal is just to find a sufficiently good
answer.

\paragraph{Applications.}
Noisy binary search is also used as a subroutine in other
algorithms. For instance in \cite{nbstesting} it is used for group
testing, and in Crume~\cite{gitbisect} as a replacement for git-bisect
under unreliable tests.  Both implementations were based on the
multiplicative weights algorithm of Karp and Kleinberg~\cite{karp}.

\subsection{Why constants?}\label{sec:constants}
There is a tendency in theoretical computer science to regard constant
factors as unimportant.  But theorists care about constants in many
situations, such as approximation ratios or rates of codes, and we
believe that the query complexity of sublinear algorithms is another
situation where they should be considered.

In general, the arguments for ignoring constants in time complexity
hold with much less force for query complexity. The constant for time
complexity is highly dependent on the machine architecture, which
changes over time (e.g., the relative cost of addition and
multiplication). Moreover, these hardware improvements mitigate the
cost of poor constants. But the number of queries is a mathematical
value, and the cost of queries (which may be, e.g., blood tests or
running a giant test suite) does not clearly decrease with time.

The question should be: does theoretical study of constant factors
lead to algorithmic insights necessary for more practical algorithms?
Our paper shows that it does. By considering constants, we are forced
to find a more efficient way of translating the Bayesian algorithm
into one with frequentist guarantees (via
Lemma~\ref{lem:keylemma}). The constants lost in the previous attempt
at this (in~\cite{karp}) mean that it is worse than the naive method
until $n > 10^{10^{11}}$.

It should not be surprising that a simple method that loses an
$O(\log \log n)$ factor can beat an algorithm that loses ``only''
constants, for all practical values of $n$. The study of leading
constants is a lens by which we found a new algorithm that actually
outperforms the naive method for reasonable values of $n$ (namely
$n > 1000$).

\section{Detailed Proof Sketch for Upper Bound}

\subsection{Key Lemma on Bayesian Learner.}

For this proof overview, we focus on the case of $\delta > n^{-o(1)}$ and target $\tau = \frac{1}{2}$,
where \textsc{BayesLearn} queries the median of the posterior at each
stage, and
\[
  C_{\tau, \eps} = 1 - H(\frac{1}{2} + \eps) = (\frac{1}{2} + \eps) \lg (1 + 2 \eps) + (\frac{1}{2} - \eps) \lg (1 - 2 \eps)
  \approx \frac{2\eps^2}{\ln 2}.
\]

We give an overview of the proof of our key lemma in this case:

\keylemma*

Let $a$ be the ``best answer'', an interval that straddles the bias $\frac{1}{2}$.

The algorithm keeps track of a distribution $w$ on $[n-1]$; at each
step $i$, it queries the median of the current distribution $w_i$,
then multiplies the density on one side by $1 + 2\eps$ and the other
by $1-2\eps$ to form $w_{i+1}$.  We analyze the algorithm by looking
at $\lg w(a)$.

At each step, the interval $j$ we choose is either good (a valid answer) or bad (invalid).  If it
is \emph{bad}, suppose the sampled coin $x$ has probability
$p_x \geq \frac{1}{2} + \eps$.  Then $x$ is above $a$, so
$w(a)$ multiplies by $1+2\eps$ with probability $p_x$, and $1-2\eps$
with probability $1-p_x$.  Hence:
\[
  \E[\lg w_{i+1}(a) - \lg w_i(a)] = p_x \lg (1 + 2\eps) + (1-p_x) \lg (1 - 2\eps) \geq  C_{\tau,\eps}.
\]
The case of $p_x \leq \frac{1}{2} - \eps$ is symmetric, giving the
same bound.  So every bad interval we select increases $\lg w(a)$ by
$C_{\tau, \eps}$ in expectation.

On the other hand, if the interval we select is \emph{good}, $\lg w(a)$
may decrease in expectation.  For example, if we query coin $a$ and $\sum_{i=1}^{a-1} w(i) = \frac{1}{2}$, 
we could have
\[
  \E[\lg w_{i+1}(a) - \lg w_i(a)] = \frac{1}{2}\lg (1 - 2 \eps) + \frac{1}{2} \lg (1 + 2 \eps) \approx -\frac{2\eps^2}{\ln 2} \approx -C_{\tau,\eps}
\]
It turns out this is essentially the worst case, and in general the
expected decrease in $\lg w(a)$ is no more than $5C_{\tau,\eps}$ for
any $\eps < \frac{1}{2}\min(\tau, 1-\tau)$.
As a result, the potential function
\[
  \lg w_i(a) - \gamma C_{\tau, \eps} \cdot (\text{\# intervals chosen}) + 6 C_{\tau,\eps} \cdot (\text{\# good intervals chosen})
\]
increases by at least $(1-\gamma)C_{\tau,\eps}$ in expectation in each
step $i$, regardless of where the median is in that step.  This
potential function starts at $-\lg (n-1)$, so after
$M = (1 + 2\gamma) \frac{1}{C_{\tau,\eps}}\lg n$ steps it is at least
$\Theta(\gamma) \lg n$ in expectation.  An Azuma-Hoeffding bound
shows that the value concentrates about this expectation, and in particular will be positive with
$1 - \delta$ probability.  If so, since $\lg w_i(a) \leq 0$ always,
we have
\[
  6 \cdot (\text{\# good intervals chosen}) - \gamma (\text{\# intervals chosen}) \geq 0,
\]
and hence a $\frac{\gamma}{6}$ fraction of chosen intervals are
good.

This proves the key lemma: after
$(1 + O(\gamma)) \frac{1}{C_{\tau,\eps}}\log n$ steps of \bayesalg, a
$\gamma$ fraction of coins flipped are good with decent probability.

\paragraph{Targets $\tau \neq \frac{1}{2}$.}  When $\tau \neq \frac{1}{2}$, the
maximum-information query is no longer the median coin, but a slightly
different quantile $\frac{1}{2} \pm O(\frac{\eps}{\tau(1-\tau)})$, and
the Bayesian updates use more complicated factors.  This choice is
still capacity-achieving on bad intervals, i.e., the expected
``information gain'' is
$\E[\lg w_{i}(a) - \lg w_{i+1}(a)] \geq C_{\tau,\eps}$, and on good intervals
the expected information loss is still at most
$5C_{\tau,\eps}$, so the proof structure works unchanged.

\subsection{Rest of Upper Bound}
Recall that in this overview we assume
$\log \frac{1}{\delta} \ll \log n$.  By Lemma~\ref{lem:keylemma}, if
we take all $\{ \gamma, 2\gamma, \dots, \floor{\frac{1}{\gamma}}\gamma\}$ quantiles of the list returned by
\bayesalg, run with parameter $\eps^\prime = \eps(1 - \alpha)$ (where $\alpha$ is introduced so we can later test the bias of each coin), we get
a size-$\frac{1}{\gamma}$ list containing at least one
$\eps^\prime$-good interval.  This $\eps^\prime$ has
$C_{\tau,\eps^\prime} = (1 - O(\alpha))C_{\tau,\eps}$.  For any
$\gamma$, we can just flip all of these coins
$O(\frac{1}{\alpha^2\eps^2} \log \frac{1}{\gamma \delta})$ times to find an $\eps$-good one.  This would give sample
complexity
\begin{align}\label{eq:simpleralg}
  \underbrace{(1 + O(\gamma)) (1 + O(\alpha)) \frac{1}{C_{\tau,\eps}} \left(\lg n + O(\sqrt{\log n \log \frac{1}{\delta}})\right)}_{\text{\bayesalg, Lemma~\ref{lem:keylemma}}} + \underbrace{O(\frac{1}{\gamma} \cdot \frac{1}{\alpha^2 \eps^2} \log \frac{1}{\gamma \delta})}_{\text{Testing quantiles}}
\end{align}
which, by setting $\gamma$ and $\alpha$ to $(\frac{\log \frac{1}{\delta}}{\log n})^{1/4}$, gives sample complexity
\[
  (1 + O(\frac{\log \frac{1}{\delta}}{\log n})^{1/4})\frac{1}{C_{\tau,\eps}} \lg n.
\]

This is the desired sharp bound, within $(1 + o(1))$ of optimal.  One
can do slightly better: the second stage is itself a noisy binary
search question on $O(1/\gamma)$ coins, so by applying the algorithm
recursively with $\gamma' = O(1)$ we can solve it on the
size-$O(1/\gamma)$ list in
$O(\frac{1}{(1 - \alpha)\capacity} \log \frac{1}{\gamma \delta} +
\frac{1}{\alpha^2\eps^2}\log \frac{1}{\gamma \delta})$
queries. As we recurse on a much smaller list, the samples used are all lower order and we do not need to recurse more than once. However, the answer to the recursive call might not be a
valid answer to the original problem. Regardless, one of the endpoints
of the return call must be a valid answer, which we can test for.
By optimizing the parameters, this improves the sample complexity to
\[
  (1 + O(\frac{\log \frac{1}{\delta}}{\log n})^{1/3})\frac{1}{C_{\tau,\eps}} \lg n,
\]
giving Theorem~\ref{thm:main}.

\section{Proof of Lemma~\ref{lem:keylemma}}\label{sec:keylemma}

\newcommand{\pp}{{\prime\prime}}
\newcommand{\ppp}{{\prime\prime\prime}}
\newcommand{\z}{z}
\newcommand{\zexp}{\ensuremath{2^{\frac{H(\tau - \eps) - H(\tau + \eps)}{2\eps}}}}
\newcommand{\m}{m}
\newcommand{\mexp}{\frac{1 - (1 - \tau - \eps)(1+\z)}{2\eps(1 + \z)}}
\newcommand{\y}{y}
\newcommand{\yexp}{\frac{1}{1 + \z}}
\newcommand{\D}{\ensuremath{d}}
\newcommand{\dexp}{\frac{1}{1 + \frac{1 - \frac{\eps}{\tau}}{1 - \frac{\eps}{1 - \tau}}}}
\newcommand{\constant}{6 \capacity}
\newcommand{\expdoo}{\frac{1 - \tau - \eps}{1 - \tau - (2q-1)\eps}}
\newcommand{\expdol}{\frac{1 - \tau + \eps}{1 - \tau - (2q-1)\eps}}
\newcommand{\expdlo}{\frac{\tau + \eps}{\tau + (2q-1)\eps}}
\newcommand{\expdll}{\frac{\tau - \eps}{\tau + (2q-1)\eps}}

\subsection{Definitions}

Let $\{l, \dots, r\}$ be the set of good intervals. Let $a$ be the maximum $i \in [n-1]$ such that $p_i \leq \tau$.

Recall that we defined
\begin{align}
	 C_{\tau, \eps} &= \max_q H( (1-q) (\tau - \eps) + q(\tau + \eps)) - (1-q)H(\tau-\eps) - q H(\tau+\eps)
\end{align}

We also define the following functions:

\begin{align}
	 W(x) &= \sum_{i \in [x]} w(i)\\
	\Phi(w,L) &= \lg w(a) + \constant(|\{x \in L | x \in [l,r]\}| - \gamma|L|)
\end{align}

$\capacity$ is the capacity of a $(\tau,\eps)$-BAC. We let $q$ satisfy the equation
\[
  q = \argmax_x  H((1-x)(\tau - \eps) + x(\tau+\eps)) - (1-x)H(\tau - \eps) - xH(\tau
  + \eps),
\]
which expresses the shared information between a sent and received
message through a $(\tau,\eps)$-BAC. (See \ref{eq:cdef}, \ref{eq:qdef} for explicit formulas for $\capacity, q$) If our prior were true---so the
coins really were $\tau \pm \eps$---we would like to flip a
$\tau+\eps$ coin with probability $q$.  This is achieved by selecting
the $q$-quantile of our posterior, which is above the true
threshold with probability $q$.  If $\tau = \frac{1}{2}$, $q = \frac{1}{2}$ and we
query the median; in general, we query the
$q = \frac{1}{2} \pm O(\frac{\eps}{\tau(1-\tau)})$ quantile.

$\Phi$ is a potential function that we will be analyzing.

We also define:

\begin{align}
	d_{0,0} &= \frac{1 - \tau - \eps}{1 - \tau - (2q-1)\eps} & \label{eq:doo}\\
	d_{0,1} &= \frac{1 - \tau + \eps}{1 - \tau - (2q-1)\eps} & \label{eq:dol}\\
	d_{1,0} &= \frac{\tau + \eps}{\tau + (2q-1)\eps} & \label{eq:dlo}\\
	d_{1,1} &= \frac{\tau - \eps}{\tau + (2q-1)\eps} & \label{eq:dll}
\end{align}
for brevity. In terms of $\bayesalg$ we can think of $d_{x,y}$ as
``the multiplicative effect of a flip resulting in $x$
($1 = \text{Heads}, 0 = \text{Tails}$) on the density of an interval
on side $y$ ($1 = \text{Right}, 0 = \text{Left}$) of the flipped
coin.''  When $\tau = \frac{1}{2}$, $d_{x,y} = 1 - 2 \eps (-1)^{x \oplus y}$.

\begin{algorithm}
\caption{Acts as a Bayesian learner for $M$ iterations, returns a list of all the chosen intervals. Expressions for the $d_{x,y}$ values are given in \eqref{eq:doo}, \eqref{eq:dol}, \eqref{eq:dlo}, \eqref{eq:dll}}
\begin{algorithmic}[1]
\Procedure{getIntervalFromQuantile}{$w,q$}
	\State $i \gets \min {i \in [n]} $ s.t. $W(i) \geq q$
\EndProcedure
\Procedure{roundIntervalToCoin}{$i,w,q$}
	\State \Return $i$ \textbf{if} $\frac{q - W(i-1)}{w(i)} \leq q$ \textbf{else} $i + 1$
\EndProcedure
\Procedure{\bayesalg}{$\{c_i\}_{i=1}^n, n, \tau, \varepsilon, M$}
	\State $w_1 \gets \text{uniform}([n-1])$
	\State Define $q$ as in~\eqref{eq:qdef} \Comment{The quantile we choose}
	\State $L \gets \{\}$
	\For{$i \in [M]$}
		\State $j_i \gets \textsc{getIntervalFromQuantile}(w_i, q)$ \Comment{The chosen interval}
		\State $x_i \gets \textsc{roundIntervalToCoin}(j_i, w_i, q)$ \Comment{The index of the coin we are going to flip}
		\State append $j_i$ to $L$
		\State $y_i \gets \textsc{flip}(c_{x_i}) $
		\State $w_{i+1} \gets 
			\begin{cases} 
				w_i(x)d_{y_i,0} &\mbox{if } x \in \{1,\dots,j_i-1\} \\ 
				d_{y_i,0}(q - W_i(j_i - 1)) + d_{y_i,1}(W_i(j_i) - q)  &\mbox{if } x = j_i\\
				w(x)d_{y_i,1} &\mbox{if } x \in \{j_i + 1, \dots, n - 1\}
			\end{cases}$
	\EndFor
	\Return $L$
\EndProcedure
\end{algorithmic}
\end{algorithm}

\subsection{Analysis}

\begin{lemma}
\label{Martingale}
 $\E[\Phi_{t+1} - \Phi_{t}|y_t, y_{t-1}, \dots, y_1] \geq (1 - O(\gamma))C_{\tau,\eps}$.
\end{lemma}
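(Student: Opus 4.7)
The plan is to decompose the potential increment as
\[
\Phi_{t+1} - \Phi_t \;=\; \bigl(\lg w_{t+1}(a) - \lg w_t(a)\bigr) \;+\; 6\capacity\bigl(\mathbf{1}[j_t \in [l,r]] - \gamma\bigr),
\]
where conditional on the history $y_1,\dots,y_{t-1}$ the query $(j_t, x_t)$ is deterministic and only the flip $y_t$ is random. The goal is to show that the expected change in $\lg w(a)$ is at least $\capacity$ on \emph{bad} queries ($j_t\notin[l,r]$) and at least $-5\capacity$ on \emph{good} queries ($j_t\in[l,r]$); adding the $6\capacity(\mathbf{1}_{\text{good}}-\gamma)$ adjustment then gives at least $(1-6\gamma)\capacity$ in either case.

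For the bad case, the coin lies strictly on one side of the good region, so $p_{x_t}\geq\tau+\eps$ when $j_t>r$ and $p_{x_t}\leq\tau-\eps$ when $j_t<l$, with $a$ always on the opposite side of $j_t$. The multiplier applied to $w(a)$ is then $d_{y_t,s}$ for the appropriate $s\in\{0,1\}$, and the expected log-update is the linear-in-$p_{x_t}$ expression $p_{x_t}\lg d_{1,s}+(1-p_{x_t})\lg d_{0,s}$. At the boundary $p_{x_t}=\tau\pm\eps$ this is precisely the KL divergence $D(\tau\pm\eps\,\|\,\tau+(2q-1)\eps)$, and the defining KKT condition for the capacity-achieving $q$ is that \emph{both} such divergences equal $\capacity$. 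Since $d_{1,s}>1>d_{0,s}$ in the relevant orientation, pushing $p_{x_t}$ further past $\tau\pm\eps$ only increases the expectation, so the bound $\E[\lg w_{t+1}(a)-\lg w_t(a)\mid y_1,\dots,y_{t-1}]\geq\capacity$ follows.

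The good case is the main obstacle, since $p_{x_t}$ may be arbitrary and the multiplier may even be the fractional $\alpha d_{y_t,0}+(1-\alpha)d_{y_t,1}$ arising when $j_t=a$. I would split into sub-cases $j_t<a$, $j_t>a$, and $j_t=a$. In the first two the expected log-update is again linear in $p_{x_t}$, minimized at the endpoint $p_{x_t}=\tau$, and substituting into \eqref{eq:doo}--\eqref{eq:dll} yields
\[
\tau\lg d_{1,s}+(1-\tau)\lg d_{0,s}\;=\;\lg\!\Bigl(1-\Theta\Bigl(\tfrac{\eps^2}{\tau(1-\tau)}\Bigr)\Bigr),
\]
which under the hypothesis $\eps<\tfrac12\min(\tau,1-\tau)$ is $\geq-5\capacity$ after using $\capacity=\Theta(\eps^2/(\tau(1-\tau)))$. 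For $j_t=a$, concavity of $\lg$ together with Jensen's inequality bounds $\lg(\alpha d_{y_t,0}+(1-\alpha)d_{y_t,1})$ below by $\alpha\lg d_{y_t,0}+(1-\alpha)\lg d_{y_t,1}$, reducing to a convex combination of the $j_t<a$ and $j_t>a$ bounds.

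Combining the two regimes gives $\E[\Phi_{t+1}-\Phi_t\mid y_1,\dots,y_{t-1}]\geq(1-6\gamma)\capacity=(1-O(\gamma))\capacity$, proving the lemma. The step I expect to be hardest is pinning down the explicit $5\capacity$ constant in the good case: the asymmetric formulas for $d_{x,y}$ when $\tau\ne\tfrac12$ require a careful Taylor expansion around $\eps=0$ together with the closed form for $q$, and the $j_t=a$ sub-case must be handled separately because of the fractional split of the posterior mass on interval $a$.
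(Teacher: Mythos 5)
Your decomposition of $\Phi$ and the analysis of the bad-query case are correct and match the paper: the expected log-update at $p = \tau \pm \eps$ is exactly $\capacity$ by the capacity-achieving (equal-KL) property of $q$, and moving $p$ further from the good region only increases it. The good-query case with $j_t \neq a$ is also right — linearity in $p$ with the one-sided constraint pins $p$ at $\tau$, and a Taylor bound gives $\geq -5\capacity$, matching the paper's Lemmas~\ref{lem:logdbounds} and \ref{lem:capacitybounds}.

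The $j_t = a$ sub-case, however, has a real gap. You apply Jensen's inequality first, yielding the lower bound $kA(p) + (1-k)B(p)$ where $A(p) = p\lg d_{1,0} + (1-p)\lg d_{0,0}$ and $B(p) = p\lg d_{1,1} + (1-p)\lg d_{0,1}$, and then claim this is ``a convex combination of the $j_t<a$ and $j_t>a$ bounds.'' But those two endpoint bounds use opposite constraints: $A$ is bounded below by $A(\tau)$ only for $p \geq \tau$, while $B$ is bounded below by $B(\tau)$ only for $p \leq \tau$. Since a single $p$ appears in both terms, you cannot simultaneously use both — unless $p = \tau$. Concretely, take $\tau = 0.99$, $\eps = 0.005$ (so $\eps = \tfrac12 (1-\tau)$), $p_a = 0$, and $k = q \approx 0.55$. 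Then the Jensen bound is $q\lg d_{0,0} + (1-q)\lg d_{0,1} \approx -0.21$, roughly $30\capacity$ below $-5\capacity \approx -0.035$, even though the true expected change is $\lg\!\left(q d_{0,0} + (1-q)d_{0,1}\right) = \lg 1 = 0$. Jensen applied before pinning $p$ is too lossy when $\tau$ is far from $\tfrac12$.

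The fix — which is what the paper does — is to first use the constraint linking $k$ and the side of $\tau$ that $p$ lies on: when $k \leq q$ one has $k d_{1,0} + (1-k) d_{1,1} \leq 1 \leq k d_{0,0} + (1-k) d_{0,1}$, so the expected change is decreasing in $p$ and the constrained minimum is at $p = \tau$. After that, either the paper's derivative/concavity argument in $k$, or (more simply) your Jensen step applied to the $p=\tau$ expression, gives $kA(\tau) + (1-k)B(\tau) \geq \min(A(\tau), B(\tau)) \geq -5\capacity$. So the idea is salvageable, but the order of the two steps matters.
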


\begin{proof}
	$\Phi$ is given by the sum of equations (\ref{eq:term1}) and (\ref{eq:term2}).

	\begin{equation}\label{eq:term1}
		\constant(|\{j \in L | j \in [l,r]\}| - \gamma|L|)
	\end{equation}

	\begin{equation}\label{eq:term2}
		\lg w(a)
	\end{equation}

        Recall that in the $t$th round, $j_t$ is the interval chosen, and $x_t$ is index of the coin flipped. Let $p$ be the probability $c_{x_t}$ lands heads.

        \paragraph{Bad queries.}  Suppose $j_t \notin [l, r]$.
        If $j_t > r$, then $p \geq \tau + \eps$ and the expected change in
        \eqref{eq:term2} is
        \[
          p \lg d_{1,0} + (1-p)\lg d_{0,0}
        \]
        The first $\log$ is positive and the second $\log$ is
        negative, so this expression is minimized at $p = \tau + \eps$, at which point some
        computation (Lemma~\ref{lem:bayesgain}) shows that it equals
        $\capacity$.  Similarly, if $x_t< l$ then $p \leq \tau - \eps$ and
        the expected change is
        \[
          p \lg d_{1,1} + (1-p)\lg d_{0,1}
        \]
        which is also at least $\capacity$ by Lemma~\ref{lem:bayesgain}.

        As $j_t \not\in [l,r]$, the change in (\ref{eq:term1}) is $-\gamma \cdot \constant$.

	   Therefore in this case the expected change in $\Phi$ is at least $(1 - 6\gamma)\capacity$.

        \paragraph{Good queries.}  Suppose $j_t \in [l, r]$. The
        change in \eqref{eq:term1} is now $\constant(1 - \gamma)$.
        But how much can~\eqref{eq:term2} decrease in expectation?

	   Suppose that $j_t \neq a$. Then the expected change is either 

	   \[
		p \lg d_{1,0} + (1 - p)\lg d_{0,0}
	   \]

	   with $p \geq \tau$, or 

	   \[
		p \lg d_{1,1} + (1 - p)\lg d_{0,1}
	   \]

	   with $p \leq \tau$.

	   As $d_{1,0} = \expdlo \geq \expdoo = d_{0,0}$ and $d_{1,1} = \expdll \leq \expdol = d_{0,1}$, both these expressions are minimized when $p = \tau$.

	   So the expected change in $\eqref{eq:term2}$ is lower bounded by:

	   \begin{equation}
	   \min\left(\tau \lg d_{1,0} + (1 - \tau) \lg d_{0,0}, \tau \lg d_{1,1} + (1 - \tau) \lg d_{0,1}\right).
	   \end{equation}

	   We note that

	   \begin{align*}
		   \tau \lg d_{1,0} + (1 - \tau) \lg d_{0,0} &= (\tau + \eps)\lg d_{1,0} + (1 - \tau - \eps) \lg d_{0,0} - \eps \lg d_{1,0} + \eps \lg d_{0,0}\\
		&= \capacity - \eps \lg d_{1,0} + \eps \lg d_{0,0} \\
		&\geq \capacity - 3\eps(\frac{\eps}{\tau} + \frac{\eps}{1 - \tau}) & \text{Lemma~\ref{lem:logdbounds}}  \\
		&= \capacity - \frac{3\eps^2}{\tau(1 - \tau)}\\
		&\geq\capacity - (6\ln 2 )\capacity & \text{Lemma~\ref{lem:capacitybounds}}\\
		&\geq -5\capacity
	   \end{align*}
	a symmetric argument for lower bounding $\tau \lg d_{1,1} + (1 - \tau) \lg d_{0,1}$ holds. Therefore, the change in \eqref{eq:term2} is lower bounded by $-5\capacity$.

	   Now suppose that $j_t = a$.

	   Then the expected change in \eqref{eq:term2} is:

	   \[
		p \lg(d_{1,0} k + d_{1,1} (1-k)) + (1 - p)\lg(d_{0,0} k + d_{0,1} (1 - k))
	   \]

	   for some $k \in [0,1]$. 

	   If $k \leq q$ then we flip $a$ so $p \leq \tau$. $d_{0,0} k + d_{0,1} (1 - k) \geq d_{0,0} q + d_{0,1} (1-q) = 1$. Also $d_{1,0} k + d_{1,1} (1-k) \leq d_{1,0} q + d_{1,1} (1-q) = 1$. Therefore, this expression is minimized when $p = \tau$. By symmetry, when $k > q$ this expression is also minimized when $p = \tau$.

	   So the expected change in $\eqref{eq:term2}$ is lower bounded by

	   \[
		\tau \lg(d_{1,0} k + d_{1,1} (1-k)) + (1 - \tau)\lg(d_{0,0} k + d_{0,1} (1 - k))
	   \]
	   for some $k \in [0,1]$.

	   Taking the derivative with respect to $k$, we get

	   \[
		\tau\frac{d_{1,0} - d_{1,1}}{d_{1,1} + (d_{1,0} - d_{1,1})k} + (1 - \tau)\frac{d_{0,0} - d_{0,1} }{d_{0,1} + (d_{0,0} - d_{0,1})k}
	   \]

	   As $d_{1,1} < d_{1,0}$ and $d_{0,1} > d_{0,0}$, $\tau\frac{d_{1,0} - d_{1,1}}{d_{1,1} + (d_{1,0} - d_{1,1})k} > 0 > (1 - \tau)\frac{d_{0,0} - d_{0,1} }{d_{0,1} + (d_{0,0} - d_{0,1})k}$. We note that as $k$ increases, $\tau\frac{d_{1,0} - d_{1,1}}{d_{1,1} + (d_{1,0} - d_{1,1})k}$ decreases in magnitude, while $(1 - \tau)\frac{d_{0,0} - d_{0,1} }{d_{0,1} + (d_{0,0} - d_{0,1})k}$ increases in magnitude. Therefore, the minimum value of the above expression is achieved when $k=0$ or $k = 1$. 

	   So the expected change in $\eqref{eq:term2}$ is lower bounded by
	   \[
		\min(\tau \lg d_{1,0} + (1 - \tau) \lg d_{0,0}, \tau \lg d_{1,1} + (1 - \tau) \lg d_{0,1})
	   \]

	   which is the same expression which we lower bounded for the $j_t \neq a$ case. 

	   Combining these two cases, when the we are querying a good interval, the expected change is lower bounded by $\constant(1 - \gamma) - 5 \capacity = (1 - 6\gamma)\capacity$

 	Therefore $\E[\Phi_{t+1} - \Phi_{t}|y_t, y_{t-1}, \dots, y_1] \geq (1 - O(\gamma))C_{\tau,\eps}$.

\end{proof}

Now we prove our key lemma.

\keylemma*

\begin{proof}
	Recall that $\Phi$ is given by the sum of equations (\ref{eq:term1}) and (\ref{eq:term2}).
	\begin{equation*}\tag{\ref{eq:term1}}
		\constant(|\{j \in L | j \in [l,r]\}| - \gamma|L|)
	\end{equation*}

	\begin{equation*}\tag{\ref{eq:term2}}
		\lg w(a)
	\end{equation*}
	\item
	\paragraph{Reduction to $\Phi > 0$.} First note that $\Phi_1 = -\lg(n - 1)$, as initially $L$ is empty so (\ref{eq:term1}) is $0$, and we initialize $w$ as uniform so $w(a) = \frac{1}{n-1}$.
	Next note that if $\eqref{eq:term1} > 0$, then
	\begin{align*}
		\constant(|\{x \in L| x \in [l,r]\}| - \gamma|L|) &> 0\\
		|\{x \in L| x \in [l,r]\}|  &> \gamma|L|
	\end{align*}
	So there are strictly more than $\gamma|L|$ good intervals in $L$. 
	Next note that \eqref{eq:term2} is $\leq 0$ always, so $\Phi > 0 \implies $\eqref{eq:term1} $> 0$.
	So it suffices to show that with $\delta$ failure probability $\Phi_{t+1} > 0$, where $t  = \frac{1 + O(\gamma)}{C_{\tau, \eps}} (\lg n + O(\sqrt{\log n \log \frac{1}{\delta}} + \log \frac{1}{\delta}))$

	\paragraph{Establishing a submartingale.}

	Note by a stochastic domination argument, we can consider the worst case where all coins sampled have bias in $[\tau - \eps, \tau + \eps]$.

	Let $\sigma_i^2$ be the variance of the difference random variables $\Phi_{i+1} - \Phi_i$, then we note that $\sigma_i^2$ is a Bernoulli random variable with parameter $p \in [\tau - \eps, \tau + \eps]$, that is scaled by at most a $\max(\lg d_{1,0} - \lg d_{0,0}, \lg d_{0,1} - \lg d_{1,1}) \lesssim \frac{\eps}{\tau(1-\tau)}$ factor, therefore

	\begin{align*}
		\sigma_i^2 \lesssim p(1-p) \left(\frac{\eps}{\tau(1-\tau)}\right)^2 \lesssim \tau(1-\tau)\left(\frac{\eps}{\tau(1-\tau)}\right)^2 = \frac{\eps^2}{\tau(1-\tau)} \lesssim \capacity
	\end{align*}

	Where we use the fact that $\eps \leq \frac{\min\left(\tau,1-\tau\right)}{2}$.

	Therefore $\sigma_i^2 \lesssim \capacity$.

	\paragraph{Freedman's inequality.}

	For brevity let $g = (1 - O(\gamma))\capacity$, the lower bound given in Lemma~\ref{Martingale}.

	\begin{align*}
		\Pr[\Phi_{t+1} \leq 0] &= \Pr[\Phi_{t+1} - \Phi_1 \leq -\Phi_1] \\
		 &= \Pr[\Phi_{t+1} - gt - \Phi_1 \leq -gt - \Phi_1] \\
		&\leq \exp\left(-\frac{2(-gt - \Phi_1)^2}{\sum_{i=1}^t\sigma_i^2 + O(\frac{\eps}{\tau(1-\tau)})(gt + \Phi_1)}\right) && \text{Freedman's when } gt \geq -\Phi_1\\
		&\leq \exp(- \frac{O(1)}{t \capacity} \cdot (-gt - \Phi_1)^2) \\
		&= \exp(- \frac{O(1)}{t \capacity} \cdot (g^2t^2 + 2gt\Phi_1 + \Phi_1^2)) \\
	\end{align*}

	Bounding this expression by $\delta$, we get

	\begin{align}
		\nonumber
		\exp\left(- \frac{O(1)}{t \capacity} \cdot (g^2t^2 + 2gt\Phi_1 + \Phi_1^2)\right)&\leq \delta \\ \nonumber
		g^2t^2 + 2g\Phi_1t + \Phi_1^2 &\geq \log_{O(1)}(1/\delta)t\capacity\\
		g^2t^2 + (2g\Phi_1 - \log_{O(1)}(1/\delta)\capacity)t + \Phi_1^2 &\geq 0 \label{eq1}
	\end{align}

	(\ref{eq1}) is a quadratic with respect to $t$, and has a positive leading coefficient. Applying the quadratic formula, if we set
	\begin{align}
		\nonumber
		t &\geq \frac{-(2g\Phi_1 - \log_{O(1)}(1/\delta)\capacity) + \sqrt{(2g\Phi_1 - \log_{O(1)}(1/\delta)\capacity)^2 - 4g^2\Phi_1^2}}{2g^2}\\ \nonumber
		&= \frac{-(2g\Phi_1 - \log_{O(1)}(1/\delta)\capacity) + \sqrt{(\log_{O(1)}(1/\delta)\capacity)^2 - 4g\Phi_1 \log_{O(1)}(1/\delta)\capacity}}{2g^2}\\
		&= \frac{-\Phi_1}{g} + \frac{\capacity\log_{O(1)}(1/\delta)}{2g^2} + \frac{\sqrt{(\log_{O(1)}(1/\delta)\capacity)^2 - 4g\Phi_1 \log_{O(1)}(1/\delta)\capacity}}{2g^2} \label{eq2}
	\end{align}

	then $(\ref{eq1})$ holds.

	As $\Phi_1 = -\lg(n - 1), g = (1-O(\gamma))\capacity$ we get that \eqref{eq2} is

 \[
    \frac{1}{1 - O(\gamma)} \cdot \frac{1}{C_{\tau, \eps}} (\lg n + O(\sqrt{\log n \log \frac{1}{\delta}} + \log \frac{1}{\delta}))
  \]
  
  As $\frac{1}{1 - O(\gamma)}$ is $1 + O(\gamma)$, our lemma holds.
\end{proof}

\section{Algorithm and Analysis}

\begin{algorithm}
\caption{Noisy Binary Search. It recurses at most once.}
\begin{algorithmic}[1]

\Procedure{ReductionToGamma}{$\{c_i\}_{i=1}^n, n, \tau, \varepsilon, \delta, \gamma$}
	\State $L \gets \bayesalg(\{c_i\}_{i=1}^n, n, \tau, \varepsilon, \gammaiterations)$
	\State $R \gets \{\}$
	\For{$i \in [\floor{\frac{|L|}{\ceil{\gamma |L|}}}]$}
		\State append $L_{\ceil{\gamma |L|}i}$ to $R$
	\EndFor
	\Return $\textsc{removeDuplicates}(R)$
\EndProcedure

\Procedure{\NBSName}{$\{c_i\}_{i=1}^n, n, \tau, \varepsilon, \delta, \gamma = \frac{1}{7\lg (n)}$}
	\State $\varepsilon^\prime = \varepsilon \cdot \max(1 - \sqrt[3]{\log_n(1/\delta)}, \frac{2}{3})$
	\State $R \gets \textsc{ReductionToGamma}(\{c_i\}_{i=1}^n, n, \tau, \varepsilon^\prime, \delta/3, \frac{1}{3\lg (n)})$
	\If{$|R| > 7$}
		\State $R \gets [1] + R + [n]$ \Comment{Pad $R$ with the extremes of the initial problem}.
		\State $i \gets \textsc{\NBSName}(\{c_{R_i}\}_{i=1}^{|R|}, |R|, \tau, \varepsilon^\prime, \delta/3, \frac{1}{7})$
		\State $\hat{p}_{R_i+1} \gets \text{estimate } p_{R_i + 1} \text{ up to } \pm \frac{\eps - \eps^\prime}{2} \text{ error with } \delta/3 \text{ f.p.}$ 
		\If{$\hat{p}_{R_i+1} > \tau - \eps + \frac{\eps - \eps^\prime}{2}$}
			\State \Return $R_i$
		\Else
			\State \Return $R_{i+1} - 1$
		\EndIf
	\Else
		\For{$x \in R$}
		\State $\hat{p}_{x+1} \gets \text{estimate } p_{x + 1} \text{ up to } \pm \frac{\eps - \eps^\prime}{2} \text{ error with } \delta/18 \text{ f.p.}$ 
		\If{$\hat{p}_{x+1} > \tau - \eps + \frac{\eps - \eps^\prime}{2}$}
			\State \Return $x$
		\EndIf
		\EndFor
	\EndIf
\EndProcedure

\end{algorithmic}
\end{algorithm}

\begin{algorithm}
\caption{Noisy Binary Search that gets the optimal expected queries.}
\begin{algorithmic}[1]
\Procedure{\SillyNBSName}{$\{c_i\}_{i=1}^n, n, \tau, \varepsilon, \delta$}
\State \Return $\begin{cases} \textsc{Random}([n - 1])& \text{w.p. } \delta - \delta/\lg n \\ \textsc{\NBSName}(\{c_i\}_{i=1}^n, n, \tau, \varepsilon, \frac{\delta}{\ln(n)}) & \text{otherwise}\end{cases}$
\EndProcedure
\end{algorithmic}
\end{algorithm}

\mainresult*

\begin{proof}
	\newcommand{\pinterval}{(\tau - \eps^\prime, \tau + \eps^\prime)}
	\newcommand{\einterval}{(\tau - \eps, \tau + \eps)}
\item
	\paragraph{Correctness.}
	Suppose that we run \textsc{\NBSName} on a \NBS{} instance with parameters $\{c_i\}_{i=1}^n,n,\tau, \varepsilon,\delta$.

	Assume that all probabilistic stages succeed, meaning that \textsc{ReductionToGamma}, \textsc{BayesianScreeningSearch}, and our coin bias estimation all succeed. By a union bound, this occurs with probability $\geq 1 - \delta$.

	As we pick every $\gamma|L|$th coin from $L$ and $L$ contains at least $\ceil{\gamma|L|}$ $\eps^\prime$-good intervals, $R$ contains at least one $\eps^\prime$-good interval. Suppose that $|R| \leq 7$ and that $R_i$ is the first $\eps^\prime$-good interval in $R$.

	Then for all $j \in \{1,\dots,i-1\}$, either $R_j$ is an $\eps$-good interval or it is not. If it is, then we have nothing to worry about outputting it. If it is not, then $p_{R_j + 1} \leq \tau - \eps$ (as if $p_{R_j + 1} \geq \tau + \eps$ then $R_i$ is not $\eps$-good), so $\hat{p}_{R_j + 1} \leq \tau - \eps + \frac{\eps - \eps^\prime}{2}$. So we do not output any not $\eps$-good interval before $R_i$. 

	Once we reach $R_i$, $p_{R_i + 1} > \tau - \eps^\prime$, so $\hat{p}_{R_i + 1} > \tau - \eps^\prime - \frac{\eps - \eps^\prime}{2} = \tau - \eps + \frac{\eps - \eps^\prime}{2}$ and we output $R_i$.

	Now suppose that $|R| > 7$. As we recursively run \textsc{\NBSName} with $\gamma = 1/7$, we note that for the $R$ in the recursive call $R^\prime$, $|R^\prime| = \floor{\frac{|L|}{\ceil{\gamma|L|}}} \leq \floor{\frac{1}{\gamma}} = 7$, so $|R^\prime| \leq 7$. By our work above, this means that the recursive call returns $i$ such that $[p_{R_i}, p_{R_{i + 1}}] \cap \pinterval \neq \emptyset$.

	Either $R_i$ or $R_{i + 1} - 1$ is $\eps^\prime$-good, as if $p_{R_i+1} \leq \tau - \eps^\prime$ and $p_{R_{i+1}} - 1 \geq \tau + \eps^\prime$ then $R$ must not contain any good intervals. The same logic as for the $|R| \leq 7$ case holds, and we have shown correctness.

	\paragraph{Number of samples.}
	Next we analyze the sample budget.

	Suppose that we run $\textsc{\NBSName}$ with $\gamma = 1/7$.

	The \textsc{ReductionToGamma} call takes

	\begin{align*}
		\frac{1 + O(\gamma)}{C_{\tau,\eps^\prime}} \left(\lg  n + O(\sqrt{\log n \log \frac{1}{\delta}} + \log \frac{1}{\delta})\right) &= \frac{1}{\capacity} O(\log n + \log \frac{1}{\delta})
	\end{align*}

	samples.

	As we have $\gamma = 1/7$, $|R| \leq 7$ and we go through the second branch.

	Then the bias estimation takes $O(\frac{\tau(1-\tau)\log \frac{1}{\delta}}{(\eps - (1 - \sqrt[3]{\log_n \frac{1}{\delta}})\eps)^2}) = O(\frac{\tau(1-\tau)\log \frac{1}{\delta}}{(\eps\sqrt[3]{\log_n \frac{1}{\delta}})^2}) = O(\frac{\log^{2/3}n \log^{1/3} \frac{1}{\delta}}{\capacity}) $ samples, for overall $\frac{1}{\capacity} O(\log n + \log \frac{1}{\delta})$ samples.

	Now consider the case $\gamma = \frac{1}{7\lg n}$, and suppose that $1 - \sqrt[3]{\log_n(1/\delta)} \geq 2/3$.

	 \textsc{ReductionToGamma} takes, with $\gamma = O(1/\log(n)), \eps^\prime = \eps*(1 - \sqrt[3]{\log_n(1/\delta)})$:

	\begin{align*}
	&\frac{1 + O(\gamma)}{C_{\tau,\eps^\prime}} \left(\lg  n + O(\sqrt{\log n \log \frac{1}{\delta}} + \log \frac{1}{\delta})\right) \\
	&= \frac{1 + O(\frac{1}{\log n})}{C_{\tau,\eps^\prime}} \cdot \left(\lg  n + O(\sqrt{\log n \log \frac{1}{\delta}} + \log \frac{1}{\delta})\right)\\
																														  &= \frac{1}{C_{\tau,\eps^\prime}} \cdot \left(\lg  n + O(\sqrt{\log n \log \frac{1}{\delta}} + \log \frac{1}{\delta})\right) \\
																														  &= \frac{1}{(1 - O(\sqrt[3]{\log_n (1/\delta)})\capacity} \cdot \left(\lg  n + O(\sqrt{\log n \log \frac{1}{\delta}} + \log \frac{1}{\delta})\right) & \text{(Lemma~\ref{lem:capacityepsprime})}
	\end{align*}

	samples, which is $\frac{1}{\capacity} \cdot \left(\lg  n + O(\log ^{2/3}n\log ^{1/3}\frac{1}{\delta} + \log \frac{1}{\delta} )\right)$.

	If $|R| \leq 7$ we take the second branch and take $O(\frac{\log^{2/3}n \log^{1/3} \frac{1}{\delta}}{\capacity})$ more samples, which meets our bound.

	 If $|R| > 7$ we take the first branch and recurse with $\gamma = 1/7$ and $n^\prime = O(\log n)$, for $\frac{1}{\capacity} O(\log \log n + \log \frac{1}{\delta})$ samples.

	 As established previously, the bias estimation takes $O(\frac{\log^{2/3}n \log^{1/3} \frac{1}{\delta}}{\capacity})$ samples.

	 For overall
	  \[
	    \frac{1}{C_{\tau, \eps}} (\lg n + O(\log^{2/3} n \log^{1/3} \frac{1}{\delta} + \log \frac{1}{\delta}))
	  \]

	samples.

	In the case $1 - \sqrt[3]{\log_n(1/\delta)} < 2/3$, the $\frac{O(\log \frac{1}{\delta})}{\capacity}$ term dominates the rest, and the bound holds.
\end{proof}

\mainresults*

\begin{proof}
	The failure probability of \textsc{\SillyNBSName} is $\leq \delta - \delta/\lg n + (1 - \delta + \delta/\lg n) \delta/\lg n = \delta - \delta^2/\lg n + \delta^2/\lg^2 n \leq \delta$.

	We use $0$ samples with probability $\delta - \delta/\lg n$, and the expression in Theorem~\ref{thm:main} with $\delta^\prime = \delta/\lg n$, with probability $1 - \delta + \delta/\lg n$.
\end{proof}

\section{Lower Bounds}

\begin{lemma}
\label{red:fromcom}
	Given any algorithm $\mathcal{A}$ which solves $NBS$ for parameters $(\tau,\eps)$ with sample budget $m$ and failure probability $\delta$, there exists a protocol that communicates over a discrete memoryless channel with capacity $\capacity$ with rate $R = \frac{\lg(n - 1)}{m}$ with failure probability $\delta$.
\end{lemma}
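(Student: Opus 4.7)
The plan is a direct reduction: embed a message $i^{*} \in [n-1]$ as an $\NBS$ instance and use $\mathcal{A}$ as the decoder over the binary asymmetric channel with inputs in $\{\tau - \eps, \tau + \eps\}$, which is exactly the DMC whose capacity equals $\capacity$ by \eqref{eq:Ctaueps}. The encoding is the natural one: given $i^{*}$, set $p_i = \tau - \eps$ for $i \leq i^{*}$ and $p_i = \tau + \eps$ for $i > i^{*}$. The first step would be to check that on this hard instance the only $(\tau,\eps)$-good interval is $[i^{*}, i^{*}+1]$: for $i \neq i^{*}$, both $p_i$ and $p_{i+1}$ equal the same value $\tau \pm \eps$, so $[p_i, p_{i+1}]$ is a single point that misses the open interval $(\tau-\eps, \tau+\eps)$; at $i = i^{*}$, the interval equals $[\tau-\eps, \tau+\eps]$, which does intersect $(\tau-\eps,\tau+\eps)$. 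So recovering a good interval uniquely identifies $i^{*}$.

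Next I would spell out the protocol. Sender and receiver share the randomness of $\mathcal{A}$ and simulate $\mathcal{A}$ in lockstep, feeding channel outputs into the simulation as coin-flip outcomes. In round $t$ the common simulation specifies some coin $x_t \in [n]$ to query; the sender, knowing $i^{*}$, transmits the channel input $\tau - \eps$ if $x_t \leq i^{*}$ and $\tau + \eps$ otherwise, so the channel output is distributed as $\mathrm{Bern}(p_{x_t})$, exactly as $\mathcal{A}$ expects when flipping coin $c_{x_t}$. After $m$ channel uses the receiver outputs $\mathcal{A}$'s returned interval; by the previous paragraph this decodes $i^{*}$ exactly when $\mathcal{A}$ succeeds, hence with probability at least $1-\delta$. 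Since $\lg(n-1)$ bits are transmitted in $m$ channel uses, the rate is $R = \lg(n-1)/m$, as claimed.

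The one subtlety worth flagging is adaptivity: the round-$t$ query $x_t$ in general depends on previous channel outputs, so the sender must observe them, i.e., the construction produces a feedback code. This is the main ``obstacle,'' but only a cosmetic one: Shannon's theorem says feedback does not increase the capacity of a DMC, so the strong converse still applies to the code we build, which is what the downstream lower bound (Theorem~\ref{thm:ipnbslowerboundrevised}) will invoke. Aside from this observation, the proof is short: verify the encoding is injective on good intervals, describe the lockstep simulation, and read off the rate and failure probability.
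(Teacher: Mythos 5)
Your proposal is correct and takes essentially the same route as the paper: encode $i^*$ as the transition point in a $\{\tau\pm\eps\}$ instance, simulate $\mathcal{A}$ in lockstep on both ends using shared randomness, transmit the comparison bit $[x_t \le i^*]$ over the $(\tau,\eps)$-BAC so the channel output matches the coin flip $\mathcal{A}$ expects, handle adaptivity via perfect feedback, and invoke the fact that neither feedback nor shared randomness changes DMC capacity. Your version is slightly more explicit than the paper's on one small point worth keeping: you verify that on this hard instance $[i^*,i^*+1]$ is the \emph{unique} $(\tau,\eps)$-good interval (since the defining intersection is with the open interval $(\tau-\eps,\tau+\eps)$), so the receiver's good-interval output decodes $i^*$ exactly; the paper states this implicitly by saying $B$ ``successfully recovers $x^*$.''
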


\begin{proof}
	Let channel $\mathcal{C}$ be a $(\tau, \eps)$-BAC with shared randomness and perfect feedback.\\
	Binary asymmetric channels are discrete memoryless channels, and so neither feedback nor shared randomness change its channel capacity. \cite{shannonfeedback} Therefore, the capacity of $\mathcal{C}$ is $\capacity$. \\
	Suppose we have agents $A$ and $B$, and $A$ wishes to communicate a message $x^* \in [n-1]$  to $B$ over $\mathcal{C}$. Also assume without loss of generality $\mathcal{A}$ always flips a coin exactly $m$ times.

	Both $A$ and $B$ can run an identical copy of $\mathcal{A}$, as we have shared randomness. When the algorithm flips a coin $x^*$, $A$ sends $B$ $1$ if $x^* < x$, and $0$ otherwise. Then if $x^* < x$, $B$ receives $1$ with probability $\tau + \eps$, and $0$ with probability $1 - \tau - \eps$. If $x^* \geq x$, then $B$ receives $1$ with probability $\tau - \eps$ and $0$ with probability $1 - \tau + \eps$. With perfect feedback we can have $A$  receive the same value that $B$ received. Note that to $B$ this is just an NBS problem with parameters $n,\tau,\eps$, and so it successfully recovers $x^*$ with probability $\geq 1 - \delta$. The rate of our simulated code is $R = \frac{\lg(n-1)}{m}$, and so the lemma holds.
\end{proof}

Now we can use lower bounds from information theory. 

\begin{lemma}[Shannon's Strong Converse Theorem]
\label{shannonconverse}
	Over any discrete memoryless channel, for $R > C$
	$$P_e \geq 1 - \frac{K_1}{n(R-C)^2} - \exp(-K_2n(R - C))$$

	where $P_e$ is the probability of error, $K_1,K_2$ are positive constants which depend on the channel, $n$ is the input alphabet size, $R$ is the rate of information, and $C$ is the channel capacity
	\cite{gallager}
\end{lemma}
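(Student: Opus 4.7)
The plan is to prove the strong converse via the information spectrum method, which cleanly yields both the Chebyshev-type $1/(n(R-C)^2)$ term and the exponentially decaying slack in the stated bound.

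First, let $Q^*$ be the capacity-achieving output distribution of the DMC $W$. A classical Kuhn--Tucker characterization of channel capacity gives $D(W(\cdot|x) \,\|\, Q^*) \le C$ for every input symbol $x$, with equality on the support of any capacity-achieving input distribution. I would define the information density
\[
  i_n(x^n; y^n) = \sum_{k=1}^n \lg \frac{W(y_k|x_k)}{Q^*(y_k)}.
\]
Then under the channel $W^n(\cdot \mid x^n)$, we have $\E[i_n(x^n; Y^n)] \le nC$ for every $x^n$, and the variance is at most $n\sigma^2$ for a constant $\sigma^2 < \infty$ depending only on the finite alphabets.

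Second, I would invoke the Verd\'u--Han converse lemma: for any $(M,n)$-code and any $\gamma > 0$,
\[
  P_e \ge \Pr\bigl[ i_n(X^n;Y^n) \le \lg M - \gamma \bigr] - 2^{-\gamma},
\]
where the probability is over a uniformly chosen codeword $X^n$ and its channel output $Y^n$. The proof of this lemma is the main non-mechanical step: the $M$ correct-decoding regions are disjoint in the output space, and any codeword whose correct-decoding region lies in the ``typical'' set $\{y^n : i_n(x^n;y^n) > \lg M - \gamma\}$ must carry $(Q^*)^n$-mass at least $2^{-(\lg M - \gamma)}$ on that region, so by disjointness at most $2^{\gamma}$ codewords can be of this kind; the remaining codewords contribute to the ``bad'' event $\{i_n \le \lg M - \gamma\}$ on their error event.

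Finally, I would set $\gamma = n(R-C)/2$, so that $\lg M - \gamma = n(C + (R-C)/2)$. Chebyshev applied to the sum $i_n$ (whose per-symbol mean is at most $C$ and variance at most $\sigma^2$) gives
\[
  \Pr\bigl[i_n > n(C + (R-C)/2)\bigr] \le \frac{n\sigma^2}{(n(R-C)/2)^2} = \frac{4\sigma^2}{n(R-C)^2}.
\]
Combining yields $P_e \ge 1 - 4\sigma^2/(n(R-C)^2) - 2^{-n(R-C)/2}$, matching the stated form with $K_1 = 4\sigma^2$ and $K_2 = (\ln 2)/2$. The hard part is the Verd\'u--Han step; once it is in hand, the rest is elementary moment estimates on an i.i.d.\ sum with bounded increments. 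A minor subtlety is that the mean bound $\E[i_n] \le nC$ must hold for \emph{every} codeword regardless of the codebook, which is precisely what the Kuhn--Tucker condition on $Q^*$ buys us, so no assumption on the code's input distribution is needed.
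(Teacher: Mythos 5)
The paper does not prove this lemma; it cites Gallager's textbook, where this form of the strong converse (due to Wolfowitz) is standard, so there is no in-paper argument to compare against. Your proposal is a correct proof. The route through the information density $i_n$ against the capacity-achieving output law $Q^*$, using the Kuhn--Tucker inequality $D(W(\cdot\mid x)\,\|\,Q^*)\le C$ to get $\E[\,i_n(x^n;Y^n)\,]\le nC$ for \emph{every} input sequence $x^n$ regardless of the codebook, combined with the Verd\'u--Han change-of-measure lemma and Chebyshev at threshold $\gamma = n(R-C)/2$, does yield $P_e \ge 1 - 4\sigma^2/(n(R-C)^2) - 2^{-n(R-C)/2}$ with $\sigma^2$ a channel-dependent per-letter variance; this matches the stated form with $K_1 = 4\sigma^2$ and $K_2 = (\ln 2)/2$. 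Two small notes. First, your informal gloss of why the Verd\'u--Han lemma holds has the inequality running the wrong way: on the high-density set $\{y : i_n(x_m;y) > \lg M - \gamma\}$ one has $Q^{*n}(y) < 2^{-(\lg M - \gamma)}\, W^n(y\mid x_m)$, so the piece of the decoding region $D_m$ falling inside that set has $Q^{*n}$-mass \emph{at most} $2^{\gamma}/M$; it is these upper bounds plus disjointness of the $D_m$ (so $\sum_m Q^{*n}(D_m)\le 1$) that caps the total contribution to $1-P_e$ by $2^{-\gamma}$, not a lower bound on mass and a counting argument. Since you invoke the lemma rather than reprove it, the argument stands and this is only a cosmetic slip in the intuition. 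Second, the paper's phrase ``$n$ is the input alphabet size'' is a misstatement of the cited result: in Gallager, and in your proof, $n$ is the block length (number of channel uses). You correctly and tacitly read $n$ as block length throughout; this is worth flagging because it matters for how the lemma is applied downstream.
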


\ipnbslowerboundrevised*

\begin{proof}
	Let $\alpha = \frac{1}{1 + \frac{K}{\capacity\sqrt{\beta(n-1)}}}$, for constant $K$ to be determined later. Suppose that $\mathcal{A}$ uses at most $\alpha \frac{\lg(n-1)}{\capacity}$ samples with probability at least $\delta + \beta$. Let $\mathcal{A}^\prime$ be the algorithm that runs $\mathcal{A}$, but outputs a random answer if $\mathcal{A}$ uses more than $\alpha \frac{\lg(n-1)}{\capacity}$ samples. $\mathcal{A}^\prime$ fails only whenever $\mathcal{A}$ fails or uses more than $\alpha \frac{\lg(n-1)}{\capacity}$ samples, so by a union bound $\mathcal{A}^\prime$ has a failure probability of at most $1 - \beta$.

	By Lemma~\ref{red:fromcom} we can construct a protocol over a discrete memoryless channel with capacity $\capacity$ that communicates at rate $R = \frac{\lg(n-1)}{\frac{\alpha \lg(n-1)}{\capacity}} = \frac{\capacity}{\alpha} = \frac{C}{\alpha}$ with a failure probability of at most $1 - \beta$.

	By Lemma~\ref{shannonconverse} we have that
	\begin{align*}
	1 - \beta &\geq 1 - \frac{K_1}{(n-1)(R-C)^2} - \exp(-K_2(n-1)(R - C))\\
	&= 1 - \frac{K_1}{(n-1)((1/\alpha - 1) C)^2} - \exp(-(n-1)K_2((1/\alpha - 1) C))\\
	&= 1 - \frac{K_1\beta}{K} - \exp(-\sqrt{\frac{n-1}{\beta}}K_2K)\\
	&\geq 1 - \frac{\beta}{2} & \hspace{-1in}\text{sufficiently large } K \text{ and } n\\
	\end{align*}

	which is a contradiction. Therefore with probability at least $1 - \delta - \beta$ $\mathcal{A}$ uses $\frac{1}{1 + \frac{K}{\sqrt{\beta(n-1)}(\capacity)}} \frac{1}{\capacity} \lg n$ samples.

\end{proof}

We can also lower bound the expected number of samples.

\ipnbscorollary*

\begin{proof}
	Suppose we have algorithm $\mathcal{A}$ which uses $q$ samples in expectation to solve NBS with $\delta$ failure probability. By Lemma \ref{red:fromcom} and Fano's inequality
	$\lg (n-1) - q\capacity \leq 1 + \delta\lg (n - 2) \implies (1-\delta)\lg (n - 2) -1 \leq q\capacity \implies (1-\delta)\frac{\lg (n-2)-1}{\capacity} \leq q$.
\end{proof}

\section{Experiments}\label{sec:experiments}

\paragraph{Applying NBS.}

To demonstrate the practicality of \textsc{\NBSName} we compare it to
standard binary search with repetition (\textsc{NaiveNBS}) and the two algorithms
of~\cite{karp} (\textsc{KKBacktracking} and \textsc{KKMultiplicativeWeights}).

To fairly compare between these algorithms, we can't just use the descriptions given in~\cite{karp}, as the constants used in analysis are not optimized. We leverage {\textsc{\NBSName}} to address this. We tweak the listed algorithms so they take a sample budget as input which they allocate among all their stages. To estimate how large a budget is needed for algorithm $\mathcal{A}$ to perform well on distribution $\mathcal{D}$, we run \textsc{BayesianScreeningSearch} where when the $i$th coin is flipped we run $\mathcal{A}$ on some input drawn from $\mathcal{D}$, and return $1$ if $\mathcal{A}$ succeeds and $0$ if $\mathcal{A}$ fails. By setting $\tau = .8,.9$ and $\eps = .05$, we get upper and lower bounds for how many samples is needed to get $\delta=.15$ failure probability.

\paragraph{Experiments.}

We compare results on 4 different problem distributions: \textsc{Standard}, \textsc{Biased}, \textsc{Lopsided}, and \textsc{Wide}. 

\begin{itemize}
	\item[\textsc{Standard}] $p_i \in \{\tau - \eps, \tau + \eps\}$, $\tau = \frac{1}{2}, \eps = .1$, the transition interval chosen uniformly at random..
	\item [\textsc{Biased}] $p_i \in \{\tau - \eps, \tau + \eps\}, \tau = \frac{3}{4}, \eps = .1$, the transition interval chosen uniformly at random.
	\item[\textsc{Lopsided}] $p_i \in \{\tau - .6\eps, \tau + \eps\}$, $\tau = \frac{1}{2}, \eps = .1$, the transition interval chosen uniformly at random..
	\item[\textsc{Wide}] we choose an interval (uniformly at random) of size $10\ln n$ that linearly interpolates between $\tau - \eps$ and $\tau + \eps$, and set the rest to be $p_i \in \{\tau -\eps, \tau + \eps\}, \tau = \frac{1}{2}, \eps = .1$.
\end{itemize}

\paragraph{Results.}

We remark that $\textsc{KKBacktracking}$ performed markedly worse than
the other algorithms, and so is not included in the figures. For
reference, for $\textsc{Standard}, N = 1000$ $\textsc{KKBacktracking}$
required $m > 2.9 \times 10^{6}$ samples, while the other algorithms
need $m < 6000$ samples (see Figure~\ref{fig:results}).

\begin{figure}
\includegraphics[scale=0.5]{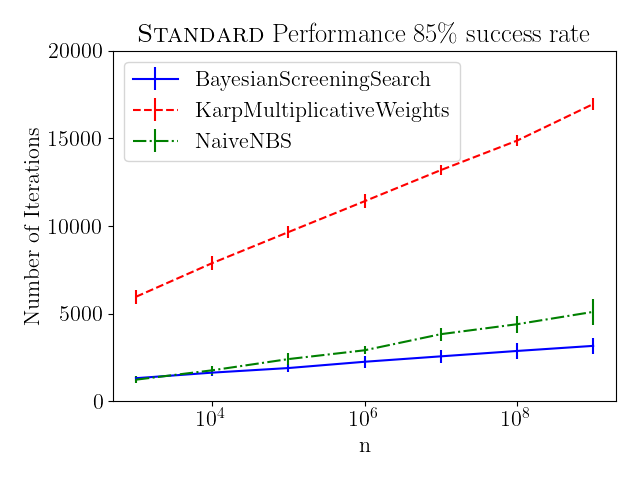}
\includegraphics[scale=0.5]{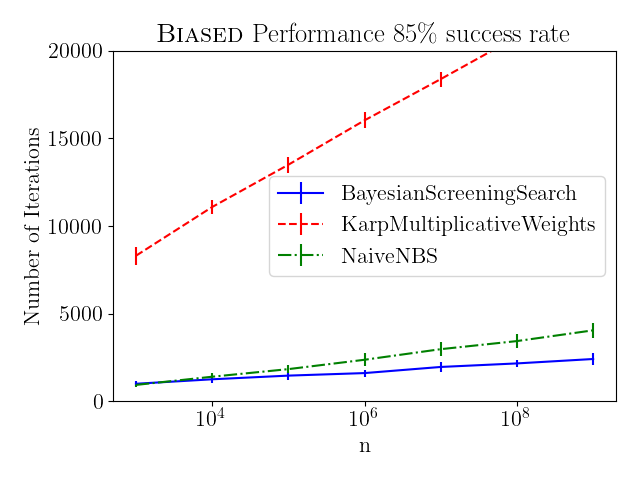}
\includegraphics[scale=0.5]{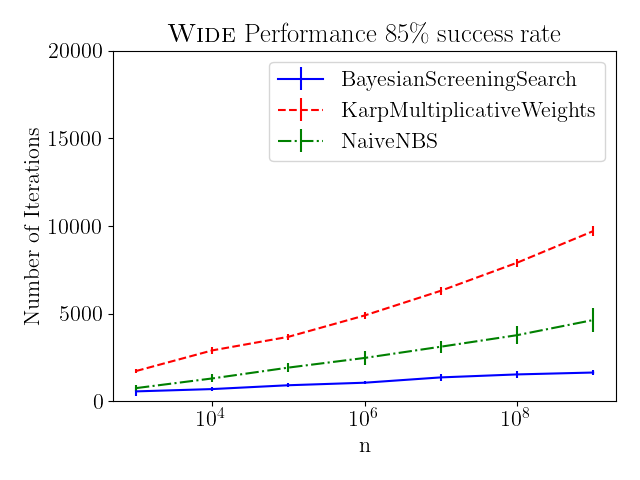}
\includegraphics[scale=0.5]{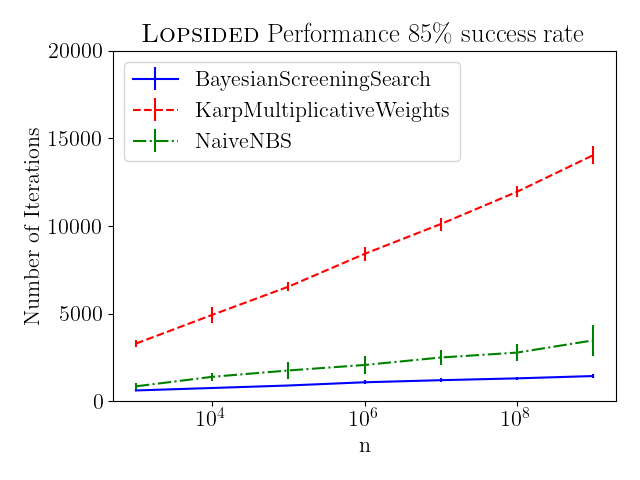}
\caption{Performance of various \NBS{} algorithms for listed distributions.}\label{fig:results}
\end{figure}

We find that~\textsc{KKMultiplicativeWeights} is outperformed by
\textsc{NaiveNBS} on all of these distributions. In contrast, {\textsc{\NBSName}} outperforms~\textsc{NaiveNBS} for $n > 10^3$.  

When $\tau \neq \frac{1}{2}$ the difference between {\textsc{\NBSName}}, \textsc{NaiveNBS} and the \cite{karp} algorithms increases. This is in line with our theory, as the first two perform better when $\tau$ is further from $\frac{1}{2}$, while the \cite{karp} algorithms reduce to the case $\tau=\frac{1}{2}$, losing a constant factor.

\paragraph{\textsc{\NBSName} variant.}

Since \textsc{\NBSName} has a large set of parameters to tune, we use a simpler variant.
We run \textsc{\NBSName} as normal but we use $\eps$ to update instead
of $\eps^\prime$ when running \bayesalg. This can be shown to satisfy
Theorem~\ref{thm:main} as well.

To see that this gets the same constant, suppose we are trying to run \bayesalg\text{ } with parameters $n,\tau,(1 - \alpha)\eps$, but in the actual algorithm we use $\eps$ instead of $(1 - \alpha)\eps$. Then the expected gain in $\Phi$ when the chosen interval is above the good range is 
\begin{align*}
	(\tau + (1-\alpha)\eps)\lg d_{1,0} + (1 - \tau - (1-\alpha)\eps)\lg d_{0,0} &= \capacity - \alpha\eps * \lg \frac{\expdlo}{\expdoo} \\ &\geq \capacity - \alpha * O(\frac{\eps^2}{\tau(1-\tau)})\\ &= (1 - O(\alpha))\capacity\\
\end{align*}

The case when the chosen interval is below the good range of intervals is symmetric. When the chosen interval is good, we can see that the loss is -$\Theta(\capacity)$ using the same work as in Lemma~\ref{lem:keylemma}. So setting $\alpha = \sqrt[3]{\log_n \frac{1}{\delta}}$ we get the same constant as \textsc{\NBSName}.

We set $\gamma = \frac{1}{\ln^2 n}$, and addition, when we recurse on $1/\gamma$ elements we run $\textsc{NaiveNBS}$ to find the $2$ coins out of the possible $7$ to test.

\paragraph{Implementation details.}

To run noisy binary search on  each algorithm we need to modify each algorithm to ``solve NBS with a given sample budget,'' instead of ``given a NBS instance solve it with as few samples as possible.'' In this section we discuss implementation decisions made.

To efficiently implement \textsc{\NBSName} we use a lazily initialized segment tree, to perform any operations on $w$ in $O(\log n)$ time. When running the algorithm with a sample budget $b$, $\frac{\lg n}{\capacity}$ was allocated to the 
\textsc{ReductionToGamma} call, $O(\frac{b\log \log n}{\capacity})$ was allocated to the Recursive \textsc{BayesianScreeningSearch} call, and $O(\frac{1}{\capacity})$ was allocated to the final bias estimation call, and the remaining budget is split among the three stages evenly.
 
To efficiently implement \textsc{KKMultiplicativeWeights} we also use a lazily initialized segment tree. When running the algorithm with a sample budget $b$, we determine the maximum number of iterations of update step we can do with this budget and perform this many steps. (In terms of the original paper, we noisy binary search on $T(n)$).

For \textsc{NaiveNBS}, for a sample budget $b$ we allocate the number of flips evenly to each of the $\lg n$ steps of the algorithm.

For \textsc{KKBacktracking} we make no modifications, as the algorithm is structured to eventually output an answer but we do not have limits on the number of samples for any stage. So when running the algorithm with sample budget $b$ we cause it to fail if the algorithm uses more than $b$ samples.

\section{Future Work}

One interesting topic of research is \emph{instance-dependent} noisy
binary search. If an instance is much nicer than the worst case, say
every coin has bias $\frac{1}{2} \pm \alpha$ for $\alpha \gg \eps$, we
would hope to get a $O(\frac{\log n}{\alpha^2})$ dependence, which
\textsc{\NBSName} does not get. One could use an adaptive coin bias
estimator to get some adaptivity, but the constants gotten from this
will likely not be good.

Another open problem is attenuating the lower order terms in the upper
bound for NBS. For realistic $n$, lower order terms such as
$\sqrt{\log n}$, or even $\log \log n$ are not negligible compared to
$\log n$, and influences the practical application of
\textsc{\NBSName}, as seen in the experimental results where we spent
$28\%$ of our samples on the ``lower order'' recursive calls.

One conjectural algorithm for noisy binary search would be: run
{\bayesalg} for $(1 + O(\gamma)) OPT$ steps, then output the median of
the last $\gamma OPT$ intervals chosen.  This interpolates between the
overall median (which loses a constant factor) and the final interval
(which has a large probability of failure), and avoids the
inefficiency of recursive calls.

\printbibliography
\appendix
\section{Computations}
This section gives the proof of some approximations used in the body
of the paper.

We give explicit formulas for some functions used in this paper

\begin{align}
	 \z &= \zexp \\
	 C_{\tau,\eps} &= \lg(\z+1) +  \frac{\tau - \epsilon}{2\epsilon}H(\tau + \epsilon) - \frac{\tau+\epsilon}{2\epsilon}H(\tau-\epsilon)\label{eq:cdef}\\
	 q &= \frac{\left(1 - \tau + \eps  \right) - \frac{1}{1 + \z}}{2\eps}\label{eq:qdef}
\end{align}

\begin{lemma}
	\label{lem:bayesgain}
	\begin{equation*}
	C_{\tau,\eps} = (\tau + \epsilon)\lg(\frac{\tau+\epsilon}{\tau + (2q-1)\eps}) + (1 - \tau - \epsilon)\lg(\frac{1 - \tau - \epsilon}{1 - \tau - (2q-1)\eps})
	\end{equation*}
	and
	\begin{equation*}
	C_{\tau,\eps}= (\tau - \epsilon)\lg(\frac{\tau-\epsilon}{\tau + (2q-1)\eps}) + (1 - \tau + \epsilon)\lg(\frac{1 - \tau + \epsilon}{1 - \tau - (2q-1)\eps})
	\end{equation*}
\end{lemma}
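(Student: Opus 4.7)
The plan is to recognize the two right-hand sides of Lemma~\ref{lem:bayesgain} as the KL divergences $D(P_{Y\mid X=1}\,\|\,P_Y)$ and $D(P_{Y\mid X=0}\,\|\,P_Y)$ for the $(\tau,\eps)$-BAC under the capacity-achieving input $q$, and then appeal to the standard Kuhn--Tucker characterization of channel capacity: at the optimum, every input symbol in the support sees the same KL divergence to the marginal output, and this common value is the capacity.

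First I would set $a = \tau + \eps$, $b = \tau - \eps$, and $m = (1-q)b + qa = \tau + (2q-1)\eps$, so that $m$ (resp.\ $1-m$) is the probability of receiving $1$ (resp.\ $0$) when $X \sim \mathrm{Bernoulli}(q)$. I would then derive the identity
\[
	\capacity \;=\; q\,D\bigl(\mathrm{Ber}(a)\,\|\,\mathrm{Ber}(m)\bigr) \;+\; (1-q)\,D\bigl(\mathrm{Ber}(b)\,\|\,\mathrm{Ber}(m)\bigr).
\]
This is pure algebra: substitute $m = qa + (1-q)b$ and $1-m = q(1-a) + (1-q)(1-b)$ into $H(m) = -m\lg m - (1-m)\lg(1-m)$, group the $q$-terms and the $(1-q)$-terms, and recognize $-a\lg m - (1-a)\lg(1-m) = H(a) + D(\mathrm{Ber}(a)\|\mathrm{Ber}(m))$ (and similarly for $b$); the $H(a), H(b)$ pieces cancel against $qH(a) + (1-q)H(b)$ in the definition of $\capacity$.

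Second, I would show that at the $q$ defined in~\eqref{eq:qdef}, these two KL divergences are equal. Direct expansion gives
\[
	D\bigl(\mathrm{Ber}(a)\,\|\,\mathrm{Ber}(m)\bigr) - D\bigl(\mathrm{Ber}(b)\,\|\,\mathrm{Ber}(m)\bigr) \;=\; H(b) - H(a) \;+\; (a-b)\lg\tfrac{1-m}{m},
\]
so the equality is equivalent to $2\eps\,\lg\tfrac{1-m}{m} = H(\tau+\eps) - H(\tau-\eps)$. This is exactly the first-order optimality condition $\tfrac{d}{dq}\bigl[H(m) - qH(a) - (1-q)H(b)\bigr] = 0$, but I prefer to verify it directly from the closed-form definitions: from~\eqref{eq:qdef} one computes $(2q-1)\eps = (1-\tau) - \tfrac{1}{1+z}$, hence $m = \tfrac{z}{1+z}$ and $1-m = \tfrac{1}{1+z}$, so $\lg\tfrac{1-m}{m} = -\lg z = (H(\tau+\eps)-H(\tau-\eps))/(2\eps)$ by the definition of $z$. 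The two KL divergences are therefore equal, and since the identity in the first step writes $\capacity$ as their convex combination with weights $q$ and $1-q$, both equal $\capacity$, which yields the two displays in the lemma.

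No step is genuinely hard; the whole proof is bookkeeping. The one place to be careful is checking the one-line rearrangement showing $m = z/(1+z)$ from~\eqref{eq:qdef}, since without that identification the first-order condition for $q$ is not transparent. Everything else is substitution.
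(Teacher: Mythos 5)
Your proof is correct, but it takes a genuinely different route from the paper's. The paper proves the identity by direct algebraic substitution: it first observes $\tau+(2q-1)\eps = \frac{z}{1+z}$ and $1-\tau-(2q-1)\eps = \frac{1}{1+z}$, expands each right-hand side into $-H(\tau\pm\eps) + \lg(z+1) - (\tau\pm\eps)\lg z$, and then substitutes $\lg z = \frac{H(\tau-\eps)-H(\tau+\eps)}{2\eps}$ to match the closed form for $\capacity$ given in \eqref{eq:cdef}. Your proof instead identifies the two right-hand sides as $D(\mathrm{Ber}(\tau+\eps)\|\mathrm{Ber}(m))$ and $D(\mathrm{Ber}(\tau-\eps)\|\mathrm{Ber}(m))$, uses the standard decomposition $I(X;Y) = \sum_x P_X(x)\,D(P_{Y\mid X=x}\|P_Y)$ to write $\capacity$ as their $q$-weighted average, and then verifies the equidistance property (the Kuhn--Tucker/first-order condition) directly from \eqref{eq:qdef} and the definition of $z$. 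The conceptual advantage of your approach is that it explains \emph{why} the lemma should hold --- it is exactly the well-known fact that the capacity-achieving input makes every input symbol's conditional output distribution equidistant (in KL) from the output marginal --- and it does not rely on the specific closed form \eqref{eq:cdef} for $\capacity$. The paper's proof is more mechanical but has the virtue of only needing the two displayed identities in \eqref{eq:cdef} and $z$, with no appeal to mutual-information decompositions. One minor point, shared with the paper: both proofs implicitly use that the $q$ in \eqref{eq:qdef} really is the global maximizer, which follows from the (standard, but unstated) concavity of mutual information in the input distribution combined with the first-order condition you verify.
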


\begin{proof}
  $\tau + (2q - 1)\eps = \tau + \eps - \frac{1}{1+\z} + (1 - \tau -
  \eps) = 1 - \frac{1}{1+\z}$ and similarly
  $1 - \tau - (2q - 1)\eps = 1 - \tau - \eps + \frac{1}{1+\z} + (1 -
  \tau - \eps) = \frac{1}{1+\z}$.  Then the first result is

  \begin{align*}
    &\qquad(\tau + \epsilon)\lg(\frac{\tau+\epsilon}{1 - \frac{1}{1+\z}}) + (1 - \tau - \epsilon)\lg(\frac{1 - \tau - \epsilon}{\frac{1}{1+\z}})\\
    &= -H(\tau + \epsilon) +(\tau+\epsilon)\lg(\frac{\z + 1}{\z}) +(1 - \tau - \epsilon)\lg(\z+1) \\
    &= -H(\tau + \epsilon) + \lg(\z+1) + (\tau+\epsilon)\lg(\frac{1}{\z})\\
    &= -H(\tau + \epsilon) + \lg(\z+1) - (\tau+\epsilon)(\frac{H(\tau-\epsilon) - H(\tau+\epsilon)}{2\epsilon})\\
    &= \lg(\z+1) +  \frac{\tau - \epsilon}{2\epsilon}H(\tau + \epsilon) - \frac{\tau+\epsilon}{2\epsilon}H(\tau-\epsilon)\\
    &= C_{\tau, \epsilon}.
  \end{align*}
  and similarly,
  \begin{align*}
    &\qquad (\tau - \epsilon)\lg(\frac{\tau-\epsilon}{1 - \frac{1}{1+\z}}) + (1 - \tau + \epsilon)\lg(\frac{1 - \tau + \epsilon}{\frac{1}{1+\z}}) \\
    &= -H(\tau - \epsilon) +(\tau-\epsilon)\lg(\frac{\z + 1}{\z}) +(1 - \tau + \epsilon)\lg(\z+1) \\
    &= -H(\tau - \epsilon) + \lg(\z+1) + (\tau-\epsilon)\lg(\frac{1}{\z})\\
    &= -H(\tau - \epsilon) + \lg(\z+1) - (\tau-\epsilon)(\frac{H(\tau-\epsilon) - H(\tau+\epsilon)}{2\epsilon})\\
    &= \lg(\z+1) +  \frac{\tau - \epsilon}{2\epsilon}H(\tau + \epsilon) - \frac{\tau+\epsilon}{2\epsilon}H(\tau-\epsilon)\\
    &= C_{\tau, \epsilon}.
  \end{align*}
\end{proof}

\begin{lemma}
\label{lem:capacitybounds}
  For $\eps \leq \frac{1}{2}\min(\tau, 1-\tau)$,
  \[
    \frac{1}{2 \ln 2} \frac{\eps^2}{\tau(1-\tau)} \leq \capacity \leq
    \frac{1}{\ln 2} \frac{\eps^2}{\tau(1-\tau)}.
  \]
\end{lemma}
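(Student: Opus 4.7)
The plan is to sandwich $\capacity = \max_q I(q)$ by applying Taylor's theorem to the Bernoulli KL divergence, where
\[
I(q) = (1-q)\,D(\tau-\eps\|\mu) + q\,D(\tau+\eps\|\mu)
\]
and $\mu := \tau + (2q-1)\eps$. The workhorse identity is that $\partial_p^2 D(p\|\mu) = \tfrac{1}{p(1-p)\ln 2}$, so Taylor's theorem with Lagrange remainder gives $D(p\|\mu) = \tfrac{(p-\mu)^2}{2\xi(1-\xi)\ln 2}$ for some $\xi$ between $p$ and $\mu$. For every $q$, both $p \in \{\tau\pm\eps\}$ and $\mu$ lie in $[\tau-\eps, \tau+\eps]$, so $\xi$ does too.

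For the upper bound, I would first show $\xi(1-\xi) \geq \tau(1-\tau)/2$ on this interval. Since $x(1-x)$ is a concave parabola maximized at $1/2$, its minimum on $[\tau-\eps,\tau+\eps]$ is at the endpoint farther from $1/2$. When $\tau \leq 1/2$ this endpoint is $\tau-\eps$, and the hypothesis $\eps \leq \tau/2$ gives $(\tau-\eps)(1-\tau+\eps) \geq (\tau/2)(1-\tau) = \tau(1-\tau)/2$ (and symmetrically for $\tau > 1/2$). Hence $D(p\|\mu) \leq \tfrac{(p-\mu)^2}{\tau(1-\tau)\ln 2}$. Setting $\delta := (2q-1)\eps$ so that $\mu = \tau + \delta$, and simplifying $(1-q)(\eps+\delta)^2 + q(\eps-\delta)^2 = \eps^2 - \delta^2$, we obtain $I(q) \leq \tfrac{\eps^2 - \delta^2}{\tau(1-\tau)\ln 2} \leq \tfrac{\eps^2}{\tau(1-\tau)\ln 2}$ uniformly in $q$, giving the upper bound on $\capacity$.

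For the lower bound, I would use $\capacity \geq I(1/2) = \tfrac{1}{2}\bigl(D(\tau-\eps\|\tau) + D(\tau+\eps\|\tau)\bigr)$. Writing $\phi(t) = D(t\|\tau)$ and noting $\phi(\tau) = \phi'(\tau) = 0$, the integral-remainder form of Taylor's theorem yields $\phi(\tau\pm\eps) = \int_0^\eps (\eps-u)\phi''(\tau\pm u)\,du$. Summing and applying AM--HM, $\tfrac{1}{a}+\tfrac{1}{b} \geq \tfrac{4}{a+b}$, with $a = (\tau+u)(1-\tau-u)$ and $b = (\tau-u)(1-\tau+u)$, whose sum telescopes to $a+b = 2\tau(1-\tau) - 2u^2$, gives $\phi''(\tau+u) + \phi''(\tau-u) \geq \tfrac{2}{(\tau(1-\tau)-u^2)\ln 2} \geq \tfrac{2}{\tau(1-\tau)\ln 2}$. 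Integrating against $(\eps-u)$ yields $\phi(\tau+\eps) + \phi(\tau-\eps) \geq \tfrac{\eps^2}{\tau(1-\tau)\ln 2}$, and halving recovers the claim.

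The main challenge is lining up constants so that the upper and lower bounds differ by exactly the stated factor of $2$; the hypothesis $\eps \leq \min(\tau,1-\tau)/2$ must be used tightly in both directions, through the pointwise bound $\xi(1-\xi) \geq \tau(1-\tau)/2$ above and through $u^2$ being absorbed cleanly into $\tau(1-\tau) - u^2$ below. An alternative lower-bound proof expands $D(\tau+\eps\|\tau) + D(\tau-\eps\|\tau) = \tfrac{1}{\ln 2}\sum_{k\geq 1}\tfrac{\eps^{2k}}{k(2k-1)}\bigl(\tau^{-(2k-1)} + (1-\tau)^{-(2k-1)}\bigr)$ as a manifestly positive power series and retains only the $k=1$ term.
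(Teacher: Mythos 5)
Your proposal is correct and takes a genuinely different route from the paper. The paper works directly with the binary entropy function $H$ via Taylor expansion about $\tau$: for the lower bound it drops to $q = \tfrac12$ and observes that, after the odd-order terms cancel, the fourth-order Lagrange remainder involves $H^{(4)} \leq 0$, so the second-order term alone is a lower bound; for the upper bound it expands all three $H(\cdot)$ terms to second order with Lagrange remainders and bounds $H^{(2)}$ pointwise. You instead rewrite the mutual information in its Kullback--Leibler form $I(q) = (1-q)D(\tau-\eps\|\mu) + qD(\tau+\eps\|\mu)$ with $\mu = \tau + (2q-1)\eps$ and Taylor-expand each $D(p\|\mu)$ about $p = \mu$. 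Two features make this a cleaner calculation. First, the algebraic identity $(1-q)(\eps+\delta)^2 + q(\eps-\delta)^2 = \eps^2 - \delta^2$ with $\delta = (2q-1)\eps$ makes the $q$-dependence collapse entirely, giving the upper bound uniformly in $q$ in one line, whereas the paper's version has to track $(2q-1)^2 H^{(2)}(\tau+a) - qH^{(2)}(\tau+b) - (1-q)H^{(2)}(\tau+c)$. Second, for the lower bound you use the integral form of the Taylor remainder together with AM--HM and the telescoping sum $(\tau+u)(1-\tau-u)+(\tau-u)(1-\tau+u) = 2\tau(1-\tau) - 2u^2$; this avoids any appeal to the sign of a fourth derivative and yields the constant $\tfrac{1}{2\ln 2}$ directly. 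Both approaches rely in the same spot on the hypothesis $\eps \leq \tfrac12\min(\tau,1-\tau)$ via $x(1-x) \geq \tau(1-\tau)/2$ for $x \in [\tau-\eps,\tau+\eps]$, which the paper uses implicitly to bound $H^{(2)}$ and you prove explicitly; that shared step is the only real overlap.
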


\begin{proof}
  Note that $H^{(2)}(x) = \frac{-1}{(\ln 2) x(1-x)}$ and
  $H^{(4)}(x) = - \frac{2}{\ln 2} \left(\frac{1}{(1-x)^3} +
    \frac{1}{x^3}\right) \leq 0$ for all $x$.

  For the lower bound, by the definition of channel capacity,
  $C_{\tau, \eps}$ is at least the information gained when choosing the median, i.e.,
  \begin{align*}
    C_{\tau,\eps} &= \max_q H(\tau + (2q-1) \eps) - (q H(\tau + \eps) + (1-q)H(\tau - \eps))\\
                  &\geq H(\tau) - \frac{1}{2}  \left(H(\tau + \eps) + H(\tau - \eps)\right).
  \end{align*}
  If we Taylor expand about $\tau$, the odd powers of $\eps$ cancel,
  leaving:
  \[
    \frac{1}{2}(H(\tau + \eps) + H(\tau + \eps)) = H(\tau) + \frac{\eps^2}{2}H^{(2)}(\tau) + \frac{\eps^4}{2\cdot 4!} \left(H^{(4)}(\tau + a_1) + H^{(4)}(\tau - a_2)\right)
  \]
  for some $0 \leq a_i \leq \eps$.  Since $H^{(4)}(x) \leq 0$ for all $x$, this gives
  \[
    C_{\tau,\eps} \geq -\frac{\eps^2}{2}H^{(2)}(\tau) = \frac{\eps^2}{2 \ln 2 \tau(1-\tau)}.
  \]

  For the upper bound, the condition on $\eps$ implies that
  $H^{(2)}(x) \geq - \frac{2}{\ln 2 \tau(1-\tau)}$ for all
  $x \in [\tau - \eps, \tau + \eps]$.  Then Taylor's theorem gives,
  for some values $a, b, c \in [-\eps, \eps]$, that  
  \begin{align*}
    C_{\tau, \eps} &= H(\tau + (2q-1) \eps) - (q H(\tau + \eps) + (1-q)H(\tau - \eps))\\
                   &= H(\tau) + (2q-1) \eps H'(\tau) + \frac{1}{2} (2q-1)^2 \eps^2 H^{(2)}(\tau + a)\\
    &\quad
      - \left(H(\tau) + q \eps H'(\tau) - (1-q) \eps H'(\tau) + q \frac{\eps^2}{2}H^{(2)}(\tau + b) + (1-q) \frac{\eps^2}{2}H^{(2)}(\tau + c)\right)\\
                   &= \frac{\eps^2}{2} \left((2q-1)^2 H^{(2)}(\tau + a) - q H^{(2)}(\tau + b) - (1-q) H^{(2)}(\tau + c)\right)\\
                   &\leq \frac{\eps^2}{ \ln 2 \cdot \tau(1-\tau)}.
  \end{align*}
\end{proof}

\begin{lemma}
  For $0 < \eps \leq \frac{1}{2}\min(\tau, 1-\tau)$,
  \[
    |q - \frac{1}{2}| \leq \frac{2\eps}{\tau(1-\tau)}.
  \]
\end{lemma}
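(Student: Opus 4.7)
The plan is to reduce the statement to a bound on $|\mu - \tau|$, where $\mu := 1 - \frac{1}{1+z}$, and then obtain that bound via a Taylor expansion of $H$ around $\tau$. Direct manipulation of \eqref{eq:qdef} gives
\[
q - \tfrac{1}{2} = \frac{(1-\tau+\eps) - \frac{1}{1+z} - \eps}{2\eps} = \frac{\mu - \tau}{2\eps},
\]
so the target becomes $|\mu - \tau| \leq \frac{4\eps^2}{\tau(1-\tau)}$.

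First I would use the identity $H'(y) = \log_2 \frac{1-y}{y}$ together with $\frac{1-\mu}{\mu} = \frac{1}{z}$ to establish $H'(\mu) = -\log_2 z = \frac{H(\tau+\eps) - H(\tau-\eps)}{2\eps}$. Taylor expanding $H(\tau \pm \eps)$ about $\tau$ with third-order Lagrange remainder, subtracting, and dividing by $2\eps$ gives
\[
H'(\mu) - H'(\tau) = \frac{\eps^2}{12}\bigl(H'''(\xi_+) + H'''(\xi_-)\bigr)
\]
for some $\xi_\pm \in [\tau-\eps, \tau+\eps]$, where $H'''(y) = \frac{1-2y}{\ln 2 \cdot y^2(1-y)^2}$. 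By the mean value theorem applied in the other direction, $H'(\mu) - H'(\tau) = H''(\xi)(\mu - \tau)$ for some $\xi$ between $\mu$ and $\tau$ (note $\mu \in [\tau-\eps, \tau+\eps]$ since $q \in [0,1]$), where $H''(y) = \frac{-1}{\ln 2 \cdot y(1-y)}$, so $|\mu - \tau| = |H'(\mu) - H'(\tau)|/|H''(\xi)|$.

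The main obstacle is bookkeeping the constants so the final bound is exactly $\frac{4\eps^2}{\tau(1-\tau)}$ rather than something larger. By the $\tau \leftrightarrow 1-\tau$ symmetry of the claim, assume WLOG $\tau \leq \frac{1}{2}$. The hypothesis $\eps \leq \min(\tau,1-\tau)/2$ forces every relevant argument $y$ into $[\tau/2, (1+\tau)/2]$, on which $|1-2y| \leq 1-\tau$, $y \geq \tau/2$, and $1-y \geq (1-\tau)/2$, while the maximum of $\xi(1-\xi)$ on the same interval is at most $3\tau/2$, giving $|H''(\xi)| \geq \frac{2}{3\tau \ln 2}$. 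Combining the $|H'''|$ bound $\frac{16}{\ln 2 \cdot \tau^2(1-\tau)}$ with these estimates yields $|H'(\mu) - H'(\tau)| \leq \frac{8\eps^2}{3 \ln 2 \cdot \tau^2(1-\tau)}$, and dividing by $|H''(\xi)|$ gives exactly $|\mu - \tau| \leq \frac{4\eps^2}{\tau(1-\tau)}$, hence the desired bound on $|q - \tfrac{1}{2}|$.
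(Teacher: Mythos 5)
Your proof is correct and follows essentially the same route as the paper's. Both arguments pivot on the identity $H'(\tau + (2q-1)\eps) = \frac{H(\tau+\eps)-H(\tau-\eps)}{2\eps}$ (your $\mu$ is exactly the paper's $\tau + (2q-1)\eps$, since $\tau + (2q-1)\eps = 1 - \frac{1}{1+z}$), Taylor-expand the right side to third order to isolate an $O(\eps^2)$ remainder, and then divide by a lower bound on $|H''|$ to convert the $H'$-difference into a bound on $|2q-1|$; you reach the identity algebraically from \eqref{eq:qdef} while the paper reads it off as the first-order optimality condition of the capacity maximization, and you invoke the mean value theorem on $H'$ where the paper writes the equivalent first-order Taylor expansion, but these are cosmetic differences.

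One small point of care: the phrase ``the maximum of $\xi(1-\xi)$ on the same interval'' reads as if you are maximizing over $[\tau/2,(1+\tau)/2]$, where the true maximum is $1/4$, which exceeds $3\tau/2$ once $\tau < 1/6$. What saves you is that $\xi$ actually lies between $\mu$ and $\tau$, hence in $[\tau-\eps,\tau+\eps]\subseteq[\tau/2,3\tau/2]$, and there $\xi(1-\xi)\le\xi\le 3\tau/2$ does hold; you should state the bound over that tighter interval to make the step airtight.
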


\begin{proof}
  Recall that $q \in [0, 1]$ is chosen to maximize the expected information gain:
  \begin{equation}
    \label{eq:maxcap}
    H((1-q)(\tau - \eps) + q(\tau+\eps)) - (1-q) H(\tau - \eps) - q H(\tau + \eps).
  \end{equation}
  Setting the derivative of this to zero, we get
  \begin{align}
    \nonumber
    0 &= 2\eps H^\prime(\tau + (2q-1)\eps) + H(\tau - \eps) - H(\tau+\eps)\\
    H^\prime(\tau + (2q - 1)\eps)  &=  \frac{H(\tau+\eps) - H(\tau-\eps)}{2\eps} \label{eq:objective}
  \end{align}
  As $H^\prime$ is strictly decreasing, $m$ is the unique  solution to \eqref{eq:objective}.

  Observe that
  \begin{align*}
    H(\tau + \eps) - H(\tau - \eps) = 2 \eps H'(\tau) + \frac{1}{3} \eps^3 (H^{(3)}(\tau + a) + H^{(3)}(\tau + b))
  \end{align*}
  for some $a, b \in [\tau-\eps,\tau+\eps]$.  Since
  \[
    H^{(3)}(x) = \frac{1}{\ln 2} (\frac{1}{x^2}- \frac{1}{(1-x)^2}),
  \]
  we have $\abs{H^{(3)}(\tau + a)} \leq 4 \frac{1}{\ln 2} \max(\frac{1}{\tau^2}, \frac{1}{(1-\tau)^2})$, and similarly for $b$.  Hence
  \begin{align}
    \frac{1}{2\eps} \left(H(\tau + \eps) - H(\tau - \eps)\right)= H'(\tau) + \Delta\label{eq:Delta}
  \end{align}
  for
  $\abs{\Delta} \leq \frac{\eps^2}{6} \frac{8}{\ln 2}
  \max(\frac{1}{\tau^2}, \frac{1}{(1-\tau)^2})$.  Returning
  to~\eqref{eq:objective},
  \[
    H^\prime(\tau + (2q - 1)\eps) = H^\prime(\tau) + (2q - 1) \eps H^{(2)}(\tau + c)
  \]
  for some $\abs{c} \leq \abs{2q-1}\eps$.  Combining with~\eqref{eq:Delta},
  \[
    \abs{(2q-1) \eps H^{(2)}(\tau + c) } = \abs{\Delta} \leq
    \eps^2\frac{4}{3\ln 2} \max(\frac{1}{\tau^2},
    \frac{1}{(1-\tau)^2}).
  \]
  Since $H^{(2)}(\tau + c) = \frac{1}{\ln 2 (\tau + c) (1 - \tau - c)} \geq \frac{2}{3} \frac{1}{\ln 2 \cdot \tau (1-\tau)}$,
  \[
    \abs{2q-1} \leq 2 \eps \tau(1-\tau)\max(\frac{1}{\tau^2},
    \frac{1}{(1-\tau)^2}) = 2 \eps \max( \frac{1-\tau}{\tau}, \frac{\tau}{1-\tau}) \leq \frac{2\eps}{\tau(1-\tau)}.
  \]
\end{proof}

\begin{lemma}
\label{lem:capacityepsprime}
	\[
		C_{\tau,(1-o(1))\eps} \geq (1 - o(1))C_{\tau,\eps}
	\]
\end{lemma}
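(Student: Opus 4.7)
The plan is to prove the lemma by sharpening the two-sided bound of Lemma~\ref{lem:capacitybounds} into a precise quadratic approximation
\[
  C_{\tau,\eps} = \frac{\eps^{2}}{2\tau(1-\tau)\ln 2}\bigl(1 + O(\eps^{2}/(\tau(1-\tau))^{2})\bigr),
\]
from which the claim follows by direct arithmetic on the ratio.

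\textbf{Step 1 (refined expansion).} I would start from the variational formula~\eqref{eq:Ctaueps}, or equivalently the explicit formula~\eqref{eq:cdef}, and Taylor-expand the binary entropies $H(\tau\pm\eps)$ about $\tau$ to fourth order. Constant and odd-order terms in $\eps$ drop out: the linear and cubic contributions appear with opposite signs in the $\tau+\eps$ and $\tau-\eps$ expansions and cancel after combination, as already observed in the proof of Lemma~\ref{lem:capacitybounds}. What survives is the leading coefficient $-\tfrac12 H^{(2)}(\tau) = \tfrac{1}{2\tau(1-\tau)\ln 2}$ on $\eps^{2}$, plus a fourth-order remainder controlled by $|H^{(4)}(x)| = O((\tau(1-\tau))^{-3})$ on $[\tau-\eps,\tau+\eps]$ (valid under the hypothesis $\eps \leq \tfrac12\min(\tau,1-\tau)$). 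One subtlety is that the optimal quantile $q$ also depends on $\eps$; but by the earlier lemma $|q - \tfrac12| = O(\eps/\tau(1-\tau))$, so plugging this estimate back into~\eqref{eq:Ctaueps} only shifts the capacity by an $O(\eps^{4})$ amount and does not affect the leading $\eps^{2}$ coefficient.

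\textbf{Step 2 (arithmetic).} Writing $\eps' = (1-\alpha)\eps$ and applying the expansion to both capacities,
\[
  \frac{C_{\tau,\eps'}}{C_{\tau,\eps}} \;=\; (1-\alpha)^{2}\cdot\frac{1 + O(\eps'^{2}/(\tau(1-\tau))^{2})}{1 + O(\eps^{2}/(\tau(1-\tau))^{2})} \;=\; (1-\alpha)^{2}\bigl(1 - O(\eps^{2}/(\tau(1-\tau))^{2})\bigr).
\]
Under the hypothesis on $\eps$ and with $\tau$ bounded away from $\{0,1\}$, the $O(\eps^{2})$ correction is a bounded-in-$\alpha$ quantity, so taking $\alpha \to 0$ drives the ratio to $1$ and gives $C_{\tau,\eps'} \geq (1-o(1))C_{\tau,\eps}$ as required.

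\textbf{Main obstacle.} The hard part is Step~1. Citing Lemma~\ref{lem:capacitybounds} directly is insufficient, because it only pins the two-sided bounds down to a factor of $2$, whereas the present lemma needs the ratio of the two capacities to tend to $1$, not merely to stay within a constant factor. The work therefore is to upgrade the upper bound of Lemma~\ref{lem:capacitybounds} from $\tfrac{\eps^{2}}{\ln 2 \cdot \tau(1-\tau)}$ to the tight $\tfrac{\eps^{2}}{2\ln 2 \cdot \tau(1-\tau)}\bigl(1+O(\eps^{2})\bigr)$, which requires carrying the Taylor expansion to fourth order (to expose the cancellation of the $\eps^{3}$ terms) and carefully bounding both the derivative of $H$ and the dependence of the maximizer $q$ on $\eps$.
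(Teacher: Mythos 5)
Your plan diverges genuinely from the paper, but Step 2 has a gap that breaks the argument as written. You write
\[
  \frac{C_{\tau,\eps'}}{C_{\tau,\eps}} = (1-\alpha)^{2}\cdot\frac{1 + O(\eps'^{2}/(\tau(1-\tau))^{2})}{1 + O(\eps^{2}/(\tau(1-\tau))^{2})} = (1-\alpha)^{2}\bigl(1 - O(\eps^{2}/(\tau(1-\tau))^{2})\bigr)
\]
and then conclude that ``taking $\alpha\to 0$ drives the ratio to $1$.'' But the $O(\eps^{2})$ term has no $\alpha$-dependence and does \emph{not} vanish as $\alpha\to 0$ --- under the hypothesis it is merely bounded (possibly by something like $1/4$). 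So the last display, read literally, only gives a ratio bounded below by a fixed constant smaller than $1$, which is exactly what Lemma~\ref{lem:capacitybounds} already gave and is not enough. What you actually need is the \emph{cancellation} between the correction factors in the numerator and denominator, i.e.\ that $g((1-\alpha)\eps)/g(\eps) = 1 - O(\alpha)$ where $g(\eps) := C_{\tau,\eps}\cdot 2\tau(1-\tau)\ln 2/\eps^{2}$. Big-$O$ notation discards exactly the information required for this: knowing $g(\eps) = 1 + O(\eps^{2})$ and $g(\eps') = 1 + O(\eps'^{2})$ says nothing about $g(\eps) - g(\eps')$. To repair the argument you would need either (a) an explicit expansion with identified coefficient, $g(\eps) = 1 + c(\tau)\eps^{2} + O(\eps^{4})$, which then gives $g(\eps) - g(\eps') = O(\alpha\eps^{2})$, or (b) a direct bound such as $\bigl|\tfrac{d}{d\eps}\ln C_{\tau,\eps}\bigr| = O(1/\eps)$, which after integrating over $[(1-\alpha)\eps,\eps]$ yields $\ln(C_{\tau,\eps}/C_{\tau,\eps'}) = O(\alpha)$.

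By contrast, the paper avoids Taylor-expanding the capacity at all. It introduces an auxiliary function $C'(\alpha)$ --- the expected log-weight gain when one updates with the $\eps$-scale Bayesian factors $d_{x,y}$ but flips a $\tau\pm\alpha\eps$ coin --- which is \emph{affine in $\alpha$} by construction. It then uses $C'(1) = C_{\tau,\eps}$ (Lemma~\ref{lem:bayesgain}), $C'(\alpha) \leq C_{\tau,\alpha\eps}$, and the affinity to get $C_{\tau,\eps} - C_{\tau,\alpha\eps} \leq C'(1) - C'(\alpha) \leq (1-\alpha)\cdot O(\eps^{2}/\tau(1-\tau)) = (1-\alpha)\cdot O(C_{\tau,\eps})$, with the last step being exactly Lemma~\ref{lem:capacitybounds} (the factor-of-$2$ bound, which here \emph{is} sufficient because it is only used to compare the slope to the capacity, not to pin down the capacity itself). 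This buys a clean, quantitative $(1-\alpha)$ dependence with no need to resolve the fourth-order term; your route would give a comparably sharp bound only after substantially more work.
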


\begin{proof}
	Let $C^\prime(\alpha) := \min((\tau + \alpha\epsilon)\lg(\frac{\tau+\epsilon}{\tau + (2q - 1)\eps}) + (1 - \tau - \alpha\epsilon)\lg(\frac{1 - \tau - \epsilon}{1 - \tau - (2q - 1)\eps}), 
	(\tau - \alpha\epsilon)\lg(\frac{\tau-\epsilon}{\tau + (2q - 1)\eps}) + (1 - \tau + \alpha\epsilon)\lg(\frac{1 - \tau + \epsilon}{1 - \tau - (2q - 1)\eps}))$ for $0 < \alpha < 1$.

	Note that $C^\prime(\alpha) \leq C_{\tau,\alpha\eps}$ as otherwise we can use the analysis of Lemma \ref{lem:keylemma} to show that, we can solve NBS when $\eps^\prime = \alpha\epsilon$ with $\delta =1/n^\frac{1}{\sqrt{\ln{n}}}$, in  $(1 + o(1))\frac{\lg(n)}{(C^\prime(\alpha) + \capacity)/2}$ samples, which contradicts our lower bound.
	 
	Therefore
	\begin{align*}
		C_{\tau,\eps} - C^\prime(\alpha) &\leq C^\prime(1) - C^\prime(\alpha)\\
		&\leq \begin{aligned}\max( (1-\alpha)\epsilon (\lg(\frac{\tau + \eps}{\tau + (2q - 1)\eps}) - \lg(\frac{1 - \tau - \eps}{1 - \tau - (2q - 1)\eps}) ),\\(1-\alpha)\epsilon (\lg(\frac{\tau + (2q - 1)\eps}{\tau - \eps}) - \lg(\frac{1 - \tau - (2q - 1)\eps}{1 - \tau + \eps}) ) )\end{aligned}\\
		&\leq \begin{aligned}\max( (1-\alpha)\epsilon (\lg(\frac{\tau + \eps}{\tau - \eps}) + \lg(\frac{1 - \tau + \eps}{1 - \tau - \eps}) ),\\(1-\alpha)\epsilon (\lg(\frac{\tau + \eps}{\tau - \eps}) + \lg(\frac{1 - \tau + \eps}{1 - \tau - \eps}) ) )\end{aligned}\\
		&\leq \begin{aligned} (1-\alpha)\epsilon (\lg(\frac{\tau + \eps}{\tau - \eps}) + \lg(\frac{1 - \tau + \eps}{1 - \tau - \eps})\end{aligned}\\
		&\leq (1-\alpha)\eps (O(\frac{\eps}{\tau}) + O(\frac{\eps}{1 - \tau}) )\\
		&\leq (1-\alpha)O(\frac{\eps^2}{\tau(1-\tau)})\\
		&\leq (1-\alpha)O(C_{\tau,\eps})\\
	\end{align*}

	So when $\alpha = 1 - o(1)$

	\begin{align*}
		C_{\tau,\eps} - C_{\tau,\alpha\eps} &\leq (1-\alpha)O(C_{\tau,\eps})\\ 
		&\leq o(1)O(C_{\tau,\eps})\\ 
		C_{\tau,\eps} - o(1)O(C_{\tau,\eps}) &\leq  C_{\tau,\alpha\eps}\\ 
		(1-  o(1))C_{\tau,\eps} &\leq C_{\tau,\alpha\eps}\\ 
	\end{align*}

\end{proof}

\begin{lemma}
	\label{lem:logdbounds}
	For $\eps < \frac{1}{2}\min(\tau, 1-\tau)$,
	\begin{itemize}
		\item $\lg d_{0,0} \geq -3 \frac{\eps}{1 - \tau}$
		\item $\lg \frac{1}{d_{0,1}} \geq -3 \frac{\eps}{1 - \tau}$
		\item $\lg \frac{1}{d_{1,0}} \geq -3 \frac{\eps}{\tau}$
		\item $\lg d_{1,1} \geq - 3\frac{\eps}{\tau}$
	\end{itemize}
\end{lemma}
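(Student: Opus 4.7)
The plan is to prove each of the four inequalities by direct computation, combining the explicit formulas for $d_{x,y}$ with the bound $|q - 1/2| \leq \frac{2\eps}{\tau(1-\tau)}$ established in the preceding lemma. By the symmetry $\tau \leftrightarrow 1-\tau$ (which swaps the roles of heads and tails in the setup), it suffices to analyze two of the four inequalities, say the bounds on $\lg d_{0,0}$ and $\lg(1/d_{0,1})$; the remaining two follow with $\tau$ replaced by $1-\tau$.

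First I would simplify the denominators using the algebraic identities (implicit in the proof of Lemma~\ref{lem:bayesgain}) $1 - \tau - (2q-1)\eps = \frac{1}{1+z}$ and $\tau + (2q-1)\eps = \frac{z}{1+z}$, where $z = \zexp$. Setting $\xi = (2q-1)\eps$ for brevity, the preceding lemma gives $|\xi| \leq \frac{2\eps^2}{\tau(1-\tau)}$, which under the assumption $\eps \leq \frac{1}{2}\min(\tau,1-\tau)$ is at most $\eps/\max(\tau,1-\tau)$, and in particular much smaller than $\eps$ in the regime of interest.

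Second, for each quantity I would write it as $\lg(1\pm y)$ for an explicit small $y$ and apply the standard inequalities $\ln(1-y) \geq -y/(1-y)$ (for $y<1$) and $\ln(1+y) \leq y$ (for $y\geq 0$). For example, $\lg d_{0,0} = \lg \frac{1-\tau-\eps}{1-\tau-\xi}$, so whenever $\xi \leq \eps$ we obtain $\lg d_{0,0} \geq -\frac{\eps-\xi}{(1-\tau-\eps)\ln 2}$ (and the bound is trivial when $\xi > \eps$, since then $d_{0,0} > 1 > 2^{-3\eps/(1-\tau)}$). The analogue for $\lg(1/d_{0,1})$ comes from $\lg(1/d_{0,1}) = \lg(1 - \frac{\eps+\xi}{1-\tau+\eps})$. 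Bounding $\eps - \xi$ (or $\eps + \xi$) via the estimate on $|\xi|$ and using $1-\tau-\eps \geq (1-\tau)/2$ then gives the claimed $-3\eps/(1-\tau)$.

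The main technical obstacle is making the constant $3$ hold uniformly across the allowed range. A naive application using only $|\xi|\leq\eps$ and the coarse estimate $1-\tau-\eps \geq (1-\tau)/2$ yields a constant near $4/\ln 2 \approx 5.77$, which exceeds $3$. The sharper bound $|\xi|=O(\eps^2/(\tau(1-\tau)))$ is essential: it makes $\eps \pm \xi$ essentially $\eps$ up to lower-order terms, so the leading coefficient is $1/\ln 2 \approx 1.44$, leaving comfortable slack to absorb the remaining terms into $3$ under the assumption $\eps \leq \min(\tau, 1-\tau)/2$. Care must be taken to verify the bound even at the boundary $\eps = \min(\tau,1-\tau)/2$, where the simple linearizations are the least favorable.
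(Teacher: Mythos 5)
Your approach diverges from the paper's in a way that creates a genuine gap. The paper's proof does \emph{not} use the refined bound $|q-\tfrac12|\le\frac{2\eps}{\tau(1-\tau)}$ at all. It uses only the trivial fact $q\in[0,1]$, hence $(2q-1)\eps\in[-\eps,\eps]$, to replace the denominator $1-\tau-(2q-1)\eps$ by the worst case $1-\tau+\eps$. This reduces each bullet to a bound of the form $\lg\frac{1-x}{1+x}\ge-3x$ with $x=\eps/(1-\tau)$ or $x=\eps/\tau$, which is then established by a concavity estimate (a secant line of the concave function $\lg\frac{1-x}{1+x}$ over an interval starting at $0$). The sharpness of $3$ comes from that concavity step, not from controlling $q$ finely.

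Your proposal instead leans on $|2q-1|\lesssim\frac{\eps}{\tau(1-\tau)}$ to make $\xi=(2q-1)\eps$ ``lower order.'' The problem is that this bound is not uniformly better than the trivial one: when $\eps$ approaches $\tfrac12\min(\tau,1-\tau)$, one has $\frac{2\eps}{\tau(1-\tau)}\ge\frac{\min(\tau,1-\tau)}{\tau(1-\tau)}=\frac{1}{\max(\tau,1-\tau)}\ge 1$, which exceeds the trivial bound $|q-\tfrac12|\le\tfrac12$. So $\xi$ is \emph{not} a lower-order correction in exactly the regime you flag as ``the boundary,'' and the step where you claim the constant ``leaves comfortable slack'' is the step that is missing. (There is also an arithmetic slip: from $|q-\tfrac12|\le\frac{2\eps}{\tau(1-\tau)}$ one gets $|\xi|\le\frac{4\eps^2}{\tau(1-\tau)}$, not $\frac{2\eps^2}{\tau(1-\tau)}$.)

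A second issue is the claimed leading coefficient $1/\ln 2$. Your linearization $\ln(1-y)\ge -y/(1-y)$ applied to $\lg d_{0,0}$ produces a denominator $1-\tau-\eps$, which under $\eps\le\tfrac12(1-\tau)$ is only bounded below by $\tfrac12(1-\tau)$. This already gives a factor $\frac{2}{\ln 2}\approx 2.885$ with $\xi=0$, which nearly exhausts the budget of $3$ before any $\xi$-correction. With the trivial $|\xi|\le\eps$ you get roughly $\frac{4}{\ln 2}\approx 5.77$ (as you note), and even with the refined bound on $\xi$ the correction term $\frac{4\eps}{\tau(1-\tau)}$ is $\Theta(1)$ at the boundary, pushing the effective constant well above $3$. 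The paper's concavity argument is essentially what allows the constant to stay near $2.95$ at the worst point of the interval rather than degrading; simple tangent-line linearization of $\ln$ is not sharp enough here.

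So the route you sketch can establish the inequality in the regime $\eps=o(\min(\tau,1-\tau))$, but it does not close the case $\eps=\Theta(\min(\tau,1-\tau))$, which is precisely the boundary you identify. To repair it you would either need the paper's concavity estimate in place of the tangent-line bound, or a substantially finer estimate on $(2q-1)\eps$ than the lemma you cite provides.
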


\begin{proof}
	When $x \in [0,1/4], \lg \frac{1 - x}{1 + x} \geq 4x \cdot\lg \frac{1 - .25}{1 + .25} \geq -3x$.
	\begin{itemize}
	\item $\lg d_{0,0} = \lg \expdoo \geq \lg \frac{1 - \tau - \eps}{1 - \tau + \eps} = \lg \frac{1 - \frac{\eps}{1 - \tau}}{1 + \frac{\eps}{1 - \tau}} \geq -3 \frac{\eps}{1 - \tau}$
	\item $\lg \frac{1}{d_{0,1}} = \lg \frac{1 - \tau - (2q - 1)\eps}{1 - \tau + \eps} \geq \lg \frac{1 - \tau - \eps}{1 - \tau + \eps} = \lg \frac{1 - \frac{\eps}{1 - \tau}}{1 + \frac{\eps}{1 - \tau}} \geq -3 \frac{\eps}{1 - \tau}$
	\item $\lg \frac{1}{d_{1,0}} = \lg \frac{\tau + (2q - 1)\eps}{\tau + \eps} \geq \lg \frac{\tau - \eps}{\tau + \eps} = \lg \frac{1 - \frac{\eps}{\tau}}{1 + \frac{\eps}{\tau}} \geq -3 \frac{\eps}{\tau}$
	\item $\lg d_{1,1} = \lg \expdll \geq \lg \frac{\tau - \eps}{\tau + \eps} = \lg \frac{1 - \frac{\eps}{\tau}}{1 + \frac{\eps}{\tau}} \geq -3 \frac{\eps}{\tau}$
	\end{itemize}
\end{proof}

\end{document}